\documentclass[a4paper]{article}

\usepackage[UKenglish]{babel}
\usepackage{hyperref}

\usepackage{amsmath}
\usepackage{amssymb}
\usepackage{amsthm}
\usepackage{thmtools}
\usepackage{enumitem} 

\usepackage{algorithm}
\usepackage[noend]{algpseudocode}
\usepackage{xspace}
\usepackage{booktabs}
\usepackage[most]{tcolorbox}
\usepackage{tikz}
\usetikzlibrary{arrows.meta,positioning}
\usepackage{siunitx}
\usepackage{subcaption}

\newtheorem{theorem}{Theorem}

\newtheorem{corollary}[theorem]{Corollary}
\newtheorem{definition}[theorem]{Definition}

\usepackage{cleveref}

\addto\extrasUKenglish{%
}

\crefname{section}{Sect.}{Sect.}
\Crefname{section}{Sect.}{Sect.}
\crefname{subsection}{Sect.}{Sect.}
\Crefname{subsection}{Sect.}{Sect.}
\crefname{subsubsection}{Sect.}{Sect.}
\Crefname{subsubsection}{Sect.}{Sect.}
\crefname{paragraph}{Sect.}{Sect.}
\Crefname{paragraph}{Sect.}{Sect.}
\crefname{subparagraph}{Sect.}{Sect.}
\Crefname{subparagraph}{Sect.}{Sect.}

\crefname{figure}{Fig.}{Figs.}
\Crefname{figure}{Fig.}{Figs.}

\crefname{theorem}{Thm.}{Thms.}
\Crefname{theorem}{Thm.}{Thms.}
\crefname{lemma}{Lem.}{Lems.}
\Crefname{lemma}{Lem.}{Lems.}
\crefname{definition}{Def.}{Defs.}
\Crefname{definition}{Def.}{Defs.}
\crefname{corollary}{Cor.}{Cors.}
\Crefname{corollary}{Cor.}{Cors.}
\crefname{invariant}{Inv.}{Invs.}
\Crefname{invariant}{Inv.}{Invs.}

\crefname{algorithm}{Alg.}{Algs.}
\Crefname{algorithm}{Alg.}{Algs.}

\newcounter{bvechange}
\newcommand{\bvechange}{\stepcounter{bvechange}\textbf{(C\thebvechange)}}

\newcommand{\asg}{:=}
\newcommand{\inprocess}{\texttt{Inprocess}\xspace}
\newcommand{\backtrack}{\texttt{Backtrack}\xspace}

\newcommand{\ifb}{\textbf{if}\;}
\newcommand{\thenb}{\textbf{then}\;}

\newcommand{\csubs}[3]{\ensuremath{#2 \subseteq^{#1} #3}\xspace}
\newcommand{\notsubs}[3]{\ensuremath{#2 \not\subseteq^{#1} #3}\xspace}
\newcommand{\csubsmin}[3]{\ensuremath{#2 \subseteq^{\min(#1)} #3}\xspace}

\newcommand{\al}{\textsc{AL}\xspace}
\newcommand{\gl}{\textsc{BASE}\xspace}
\newcommand{\combined}{\textsc{COMB}\xspace}
\newcommand{\dynamic}{\textsc{DYN}\xspace}

\title{Backtrackable Inprocessing\thanks{Extended version of a paper
accepted to SAT 2026. This version corresponds to the conference
version and additionally includes the appendix, which does not
appear in the proceedings.}}

\author{Alexander Nadel\\[0.5ex]
Faculty of Data and Decision Sciences, Technion, Haifa, Israel\\
\texttt{alexandernad@technion.ac.il}}

\date{} 
\begin{document}

\maketitle

\begin{abstract}
We introduce \emph{Backtrackable Inprocessing} (BI), a framework that enables
applying inprocessing under the current trail at any decision level, at any
point during incremental SAT solving. Our approach lifts the long-standing
restriction that inprocessing must be performed only at the global decision
level, thereby substantially increasing its potential effec\-tiveness. We focus
on three highly efficient core techniques: subsumption, self-subsuming
resolution, and Bounded Variable Elimination (BVE). We show how to ensure
sound backtracking in the presence of inprocessing, and demonstrate that
applying BI for incremental preprocessing after propagating assumptions
yields significant performance improvements on Bounded Model Checking (BMC)
benchmarks from the Hardware Model Checking Competition 2017. Implemented in
the Island SAT solver (IntelSAT's fork), BI enables solving $\sim$1.5$\times$
as many difficult bounds as the baseline global-level incremental
preprocessor.
\end{abstract}

\section{Introduction}
\label{sec:intro}

Modern backtrack-search-based SAT solvers are widely used across numerous applications~\cite{HandbookSAT2021}. A SAT solver determines whether a Boolean formula in Conjunctive Normal Form (CNF) is satisfiable. Since the introduction of MiniSat~\cite{minisat}, \emph{incremental} SAT solving has become essential in many crucial domains~\cite{BacchusJarvisaloMartinsMaxSAT2021,BiereEtAlTACAS99,BradleyIC3VMCAI2011}. In incremental solving, the user performs a sequence of solver queries, each under a set of \emph{assumption literals (assumptions)}~\cite{minisat,NadelRyvchin2012}. Clauses may be added before any query, and each query checks satisfiability of all clauses provided so far under its specific assumptions. Implementations typically realize this by assigning all assumptions as initial decisions (in order) and propagating after each assignment.

\emph{Preprocessing}~\cite{DBLP:conf/sat/EenB05} and, later, \emph{inprocessing}~\cite{DBLP:conf/cade/JarvisaloHB12} have long been known to significantly accelerate SAT solving. Preprocessing simplifies the formula before search, while inprocessing simplifies it regularly during search. One example of an inprocessing technique is \emph{subsumption}: if $\alpha$ and $\beta$ are clauses with $\alpha \subseteq \beta$, then $\alpha$ subsumes $\beta$ and $\beta$ can be removed. Initially restricted to non-incremental solving, inprocessing was adapted for incremental use~\cite{DBLP:conf/sat/NadelRS12,DBLP:conf/sat/FazekasBS19} and has been supported in the mainstream incremental SAT solver CaDiCaL~\cite{DBLP:conf/cav/BiereFFFFP24} since~\cite{DBLP:conf/sat/FazekasBS19}.

However, currently, inprocessing has a crucial limitation: it is applied only \emph{globally at decision level~0}. This restriction significantly limits the power of inprocessing. To illustrate the limitation, consider the clauses $\alpha = a \lor b \lor c$ and $\beta = d \lor b \lor c$. At level~0, $\alpha$ does not subsume $\beta$. But if we inprocess at level~1, after assigning $a \asg 0$, $\alpha$ \emph{does} subsume $\beta$—\emph{until the solver backtracks below level~1}. Removing $\beta$ at this point can be highly beneficial, as it may trigger a chain of further simplifications. This motivates a natural question:

\begin{center}
\emph{Can inprocessing be invoked under the current trail at any decision level?}
\end{center}

No solver offers this capability, with correctness under backtracking the key challenge.

\subsection{Our Contribution: Backtrackable Inprocessing (BI)}
We introduce \emph{Backtrackable Inprocessing (BI)}, the first framework that enables inprocessing under the current trail at any level, at any point in the search. In this work, we focus on three core, extremely effective inprocessing techniques originally applied in the pioneering SatELite preprocessor~\cite{DBLP:conf/sat/EenB05}: subsumption, self-subsuming resolution (which we dub selfsumption), and Bounded Variable Elimination (BVE). BI fully supports incremental SAT solving.

The key challenge is to ensure that the effects of inprocessing are tracked per decision level and can be soundly undone for any level we backtrack over, while remaining in effect at lower levels. We achieve this by generalizing how the solver classifies and manages clauses.

SAT solvers classify clauses into two types: \emph{pervasive} and \emph{temporary}~\cite{DBLP:conf/ftcs/SilvaS97,NadelRyvchin2012}. The set (conjunction) of pervasive clauses is equivalent (or, in some solvers, equisatisfiable) to the set of input clauses, whereas temporaries are implied by the pervasives. Pervasive clauses must be preserved for correctness, while temporary clauses may be deleted at any time (at least at level~0) without affecting correctness and are therefore managed heuristically. Pervasive clauses begin as the input clauses but may change during solving; for example, if a temporary clause $\alpha$ subsumes a pervasive clause $\beta$, then $\beta$ is removed and $\alpha$ becomes pervasive.

We introduce additional clause types for BI, including:
\begin{itemize}
    \item \emph{stashed at level $i$}, which are deactivated until the solver backtracks below~$i$, at which point they are restored;
    \item \emph{deputy at level $i$}, which remain active and cannot be deleted until backtracking below~$i$.
\end{itemize}

As an example of how these clause types are created and  managed,  suppose $\alpha = a \lor b \lor c$ is temporary and $\beta = d \lor b \lor c$ is pervasive under decisions $e \asg 1$ (level~1) and $a \asg 0$ (level~2). After detecting that $\alpha$ subsumes $\beta$ as long as the solver does not backtrack below~2, we mark:
\[
\beta \text{ as a stashed-at-2 clause (deactivated)}, \qquad
\alpha \text{ as a deputy-at-2 clause (active)}.
\]
When the solver backtracks below level~2, $\alpha$ will either be deleted or marked as temporary, and $\beta$ will be restored as pervasive.

In our framework, at all times, each deactivated stashed clause $\beta$ has a deputy clause $\alpha$ that remains active and \emph{represents} $\beta$ (by implying it under the current trail) until either $\beta$ is restored or another stashed clause $\gamma$ that can represent $\beta$ is restored. The correct management of stashed and deputy clauses is the core technical contribution of our work.

To further clarify the role of deputy clauses, consider our example
above but with $\alpha$ pervasive. Clause $\beta$ would be handled as before
(stashed-at-2 and reactivated as pervasive once the solver backtracks below
level~2). The core difference is in the handling of $\alpha$. In the original
example, temporary $\alpha$ must be marked as a deputy to ensure correctness,
by guaranteeing that $\alpha$ is not deleted and continues to represent $\beta$
until $\beta$ is restored. If $\alpha$ is pervasive, it simply remains
so, since pervasive clauses are guaranteed not to be deleted.

We implemented BI in the open-source Island SAT solver, a fork of IntelSAT~\cite{NadelIntelSAT2022}. While, in principle, BI supports any number of inprocessing steps at arbitrary levels, the empirical part of this work focuses on \emph{incremental preprocessing}. We exploit the fact that BI can inprocess at the beginning of each incremental query \emph{after} assumptions are set and propagated, rather than only at level~0 as in standard incremental inprocessing (e.g., in CaDiCaL~\cite{DBLP:conf/cav/BiereFFFFP24,DBLP:conf/sat/FazekasBS19}). We implemented several BI-enabled incremental preprocessing strategies (Sect.~\ref{sec:bistrategies}).

{\sloppy We focus on this setting because incremental preprocessing is the core performance-critical component in incremental SAT-based Bounded Model Checking (BMC). BMC plays a central role in hardware validation, where the goal is to verify the design for as many bounds as possible, and solving each new bound may mean the difference between discovering a crucial bug or missing it entirely. However, each unrolling to a new bound typically results in a substantial increase in the complexity of the SAT instance, making progress toward higher bounds increasingly difficult. Our experiments show that, on BMC benchmarks from the Hardware Model Checking Competition 2017~\cite{BiereVanDijkHeljanko-FMCAD17}, with a one-hour timeout per benchmark and a bound threshold of 100, Island with standard global-level inprocessing solves 640 more bounds than vanilla Island without inprocessing, whereas BI enables solving 948 more bounds, i.e., about 1.5\(\times\) more of the difficult bounds unsolved without inprocessing.\par}

{\sloppy Below, \cref{sec:prelim,sec:bisub,sec:bi} give preliminaries, introduce non‑incremental subsumption‑only BI, and present full BI (proofs in Appendix~\ref{app:proofs}); \cref{sec:exp,sec:conclusion} report experiments and conclude.\par}

\section{Preliminaries}
\label{sec:prelim}

We begin by recalling standard definitions from propositional logic.

\begin{definition}[Variables, literals, clauses, CNF]\label{def:cnf}
Let $\mathcal{V}$ be a countable set of Boolean variables.
A \emph{literal} is either a variable $x \in \mathcal{V}$ or its negation $\lnot x$.
A \emph{clause} is a finite set (disjunction) of pairwise distinct literals.
A CNF formula is a finite set (conjunction) of clauses.
\end{definition}

As is standard in the SAT literature, we represent the SAT solver’s search state by a trail.

\begin{definition}[Trail and decision levels]\label{def:trail}
Let $I$ be an input CNF formula. A \emph{trail} is a finite sequence of literals
annotated with their \emph{decision level}
$\tau \;=\; \langle \ell_1 @ d_1,\; \ldots,\; \ell_k @ d_k \rangle$, such that:
\begin{itemize}
  \item all variables occurring in $\tau$ are pairwise distinct;
  \item the decision levels satisfy $d_1 \leq \cdots \leq d_k$;
  \item for each $i$ with $d_i > 0$, we say that $\ell_i$ is a \emph{decision literal} if there is
        no $j < i$ with $d_j = d_i$; otherwise $\ell_i$ is an \emph{implied literal};
  \item writing $\mathrm{Dec}(\tau)$ and $\mathrm{Impl}(\tau)$ for the sets of decision and implied
        literals on $\tau$, and for $i \in \{1,\ldots,k\}$,
        $\mathrm{Dec}^{< i}(\tau) \;:=\; \{\;\ell_j \in \mathrm{Dec}(\tau) \mid j < i\;\}$,
        we require that for every $i$ with $\ell_i \in \mathrm{Impl}(\tau)$,
        $I \land \bigwedge \mathrm{Dec}^{< i}(\tau) \;\models\; \ell_i$.
\end{itemize}
In other words, in a valid trail every implied literal $\ell_i \in \mathrm{Impl}(\tau)$ is implied by $I$
together with the decision literals appearing before $\ell_i$.
\end{definition}

For the remainder of this paper, we fix the input CNF formula to be $I$
and the trail to be $\tau$; both are assumed to be given whenever
referenced.

As a direct special case of the trail definition at decision level~0, we get that all global level-0 assignments are implied by the input formula; see Cor.~\ref{cor:level0-implied}.

\begin{corollary}[Global literals implied by \(I\)]\label{cor:level0-implied}
Let \(\tau = \langle \ell_1 @ d_1, \ldots, \ell_k @ d_k \rangle\) be a trail.
For every \(i \in \{1,\ldots,k\}\), if \(d_i = 0\), then \(I \models \ell_i\).
\end{corollary}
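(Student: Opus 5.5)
Since decision levels are non-decreasing along the trail, this should follow almost immediately from \cref{def:trail}. Fix $i$ with $d_i = 0$, and split into two cases according to whether $\ell_i$ is an implied literal. First observe that $\ell_i$ cannot be a decision literal, because the definition only designates literals with $d_i > 0$ as decisions. The intended reading is that every level-0 literal is implied (level~0 has no decision), so I would treat $\ell_i \in \mathrm{Impl}(\tau)$. Thus the last clause of \cref{def:trail} gives $I \land \bigwedge \mathrm{Dec}^{<i}(\tau) \models \ell_i$.

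The second step is to show that $\mathrm{Dec}^{<i}(\tau) = \emptyset$. Take any $\ell_j \in \mathrm{Dec}(\tau)$ with $j < i$. By monotonicity, $d_j \le d_i = 0$, so $d_j = 0$. However, decision literals have positive level by definition, which is a contradiction. Hence the conjunction over $\mathrm{Dec}^{<i}(\tau)$ is empty, i.e.\ it is $\top$, and the entailment simplifies to $I \models \ell_i$.

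The only delicate point is the formal status of level-0 literals. The definition designates decision and implied literals only among those with $d_i>0$ (``otherwise'' is governed by that clause), so strictly speaking a level-0 literal might fall into neither category. I would address this by adopting the reading clearly intended by the corollary's phrasing, ``as a direct special case'': the implication requirement applies to every non-decision literal, including level-0 ones. Under that reading the argument above is complete. I would add a remark stating this convention explicitly, since otherwise the corollary does not follow.
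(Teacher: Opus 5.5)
Your proof is correct and follows the paper's reasoning, which the paper states only as a ``direct special case'' of Def.~\ref{def:trail}: level-0 literals count as implied, and because decisions have positive level and levels are non-decreasing, $\mathrm{Dec}^{<i}(\tau)=\emptyset$, so the implied-literal condition reduces to $I \models \ell_i$. Your remark on the scope of ``otherwise'' is well taken, since the corollary holds only under the reading you make explicit, which the paper adopts tacitly: every level-0 literal counts as implied.
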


A trail can be naturally interpreted as a partial assignment consisting of all its literals.
For the example trail $\tau = \langle \lnot a @ 0,\; \lnot b @ 1,\; \lnot c @ 2,\; d @ 2 \rangle$, we have
$\tau \models (a \lor b \lor d)$ and $\tau \not\models (a \lor b \lor \lnot d)$.

We next extend our terminology to define satisfaction at a given decision level.

\begin{definition}[Restriction of a trail to level $d$]\label{def:trail-restrict}
For a trail $\tau = \langle \ell_1 @ d_1,\ldots,\ell_k @ d_k \rangle$ and $d \in \mathbb{N}$, define $\tau^{\le d} \;:=\; \langle \ell_i @ d_i \mid 1 \le i \le k,\ d_i \le d \rangle$, that is, the longest prefix of $\tau$ whose decision levels are at most $d$.
\end{definition}

\begin{definition}[Satisfaction at decision level $d$]\label{def:satisfaction-level-d}
Let $\varphi$ be any Boolean formula and $d \in \mathbb{N}$.
We say that $\varphi$ is \emph{satisfied at level $d$}  if $\tau^{\le d} \models \varphi$.
\end{definition}

We next introduce convenient notation for implication and equivalence evaluated at a given decision level (relative to the fixed input formula $I$ and trail $\tau$).

\begin{definition}[$d$-implication and $d$-equivalence]\label{def:connectives-level-d}
Let $\varphi,\psi$ be Boolean formulas, and $d \in \mathbb{N}$. We say that $\varphi$ \emph{$d$-implies} $\psi$, written $\varphi \to^d \psi$, if $\tau^{\le d} \models (\varphi \to \psi)$, and that $\varphi$ and $\psi$ are \emph{$d$-equivalent}, written $\varphi \leftrightarrow^d \psi$, if $\tau^{\le d} \models (\varphi \leftrightarrow \psi)$.
\end{definition}

For example, given the trail $\tau = \langle e @ 1,\; \lnot a @ 2 \rangle$, we have $(a \lor b \lor c) \to^2 (d \lor b \lor c)$, since $\tau^{\le 2}$ falsifies $a$ and therefore satisfies $(a \lor b \lor c) \to (d \lor b \lor c)$. In contrast, $(a \lor b \lor c) \to^1 (d \lor b \lor c)$ does \textbf{not} hold, as in $\tau^{\le 1}$ only $e$ is assigned and the implication fails.

Cor.~\ref{cor:monotone-d} (level-$d$ entailment persists under trail extension) is immediate from Def.~\ref{def:connectives-level-d}.

\begin{corollary}[Monotonicity of level-$d$ entailment]\label{cor:monotone-d}
Let $\varphi,\psi$ be Boolean formulas, and $i,j \in \mathbb{N}$ with $j \geq i$.
For $\star \in \{\to,\leftrightarrow\}$, if $\varphi \,\star^{\,i} \psi$, then $\varphi \,\star^{\,j} \psi$.
\end{corollary}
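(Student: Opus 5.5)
The plan is to reduce the statement to a containment of partial assignments and then use that entailment is monotone under extending a partial assignment. Take any $j \ge i$. By Def.~\ref{def:trail-restrict}, $\tau^{\le i}$ keeps exactly the literals $\ell_m @ d_m$ with $d_m \le i$, and $\tau^{\le j}$ keeps those with $d_m \le j$. Since $i \le j$, every literal kept in $\tau^{\le i}$ is also kept in $\tau^{\le j}$. Read as partial assignments, this gives $\tau^{\le i} \subseteq \tau^{\le j}$. Moreover, both are prefixes of $\tau$, because the levels $d_1 \le \cdots \le d_k$ are non-decreasing. Both are also consistent, since the variables in $\tau$ are pairwise distinct.

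The central step is to fix what $\tau^{\le d} \models \chi$ means for a partial assignment, namely that every total assignment extending $\tau^{\le d}$ satisfies $\chi$. I take this reading because it is the one the paper's running example uses: $\tau^{\le 2}$ falsifies $a$ and is therefore said to satisfy an implication whose other variables are unassigned. With this reading the argument is direct. Suppose $\tau^{\le i} \models (\varphi \star \psi)$ for $\star \in \{\to,\leftrightarrow\}$, and let $\sigma$ be any total assignment extending $\tau^{\le j}$. Then $\sigma$ also extends $\tau^{\le i}$, since $\tau^{\le i} \subseteq \tau^{\le j} \subseteq \sigma$. Hence $\sigma \models (\varphi \star \psi)$. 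Because $\sigma$ was arbitrary, $\tau^{\le j} \models (\varphi \star \psi)$, which by Def.~\ref{def:connectives-level-d} is exactly $\varphi \star^{j} \psi$.

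The main obstacle is this semantic point. If $\models$ for partial assignments were instead read as a three-valued evaluation, monotonicity would still hold, but it would need a short structural induction. That induction would show that a formula evaluating to true under a partial assignment stays true under any consistent extension, with the connectives $\to$ and $\leftrightarrow$ handled via Kleene semantics. I would state the extension-based reading explicitly, note that it matches the example after Def.~\ref{def:connectives-level-d}, and conclude. No other results, including Cor.~\ref{cor:level0-implied}, are needed.
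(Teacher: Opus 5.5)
Your proof is correct and takes essentially the paper's approach: the paper calls the corollary immediate from Def.~\ref{def:connectives-level-d} because level-$d$ entailment ``persists under trail extension'', and you make this explicit by showing $\tau^{\le i} \subseteq \tau^{\le j}$ and that entailment by a partial assignment is monotone under extension. Your choice of the extension-based reading of $\models$ is the intended one; under a Kleene three-valued reading the paper's own example $(a \lor b \lor c) \to^2 (d \lor b \lor c)$ would evaluate to unknown, so that alternative semantics does not need to be covered.
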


For example, with $\tau = \langle e @ 1,\; \lnot a @ 2,\; g @ 3,\; \lnot h @ 4 \rangle$, we have
$(a \lor b \lor c) \to^2 (d \lor b \lor c)$ and hence, by Cor.~\ref{cor:monotone-d}, also
$(a \lor b \lor c) \to^3 (d \lor b \lor c)$ and $(a \lor b \lor c) \to^4 (d \lor b \lor c)$.

We now introduce subsumption and its level-dependent variants  (relative to the trail $\tau$) and relate them to $d$-implication, starting with standard subsumption.

\begin{definition}[Subsumption]\label{def:subsumption}
Clause \(\alpha\) \emph{subsumes} clause \(\beta\), written \(\alpha \subseteq \beta\),
if every literal in \(\alpha\) also occurs in \(\beta\).
\end{definition}

\begin{definition}[Pruning falsified literals]\label{def:pruning}
For a clause \(\alpha\) and \(d \in \mathbb{N}\), we define the
\emph{restriction of \(\alpha\) to level \(d\)} by
\(\alpha^{\le d} := \{\ell \in \alpha \mid \lnot \ell \notin \tau^{\le d}\}\),
that is, we remove from \(\alpha\) all literals whose negation appears on the trail up to level \(d\).
\end{definition}

For the example trail \(\tau = \langle e @ 1, \lnot b @ 2 \rangle\), we have
\((a \lor b \lor c)^{\le 2} = (a \lor c)\), whereas
\((a \lor b \lor c)^{\le 1} = (a \lor b \lor c)\).

\begin{definition}[$d$-subsumption]\label{def:d-subsumption}
Let $\alpha,\beta$ be clauses and $d \in \mathbb{N}$.
We say that $\alpha$ \emph{subsumes $\beta$ at $d$} (\emph{$d$-subsumes} $\beta$),
written $\csubs{d}{\alpha}{\beta}$, if $\alpha^{\le d} \subseteq \beta^{\le d}$.
\end{definition}

For example, under $\tau = \langle \lnot e @ 1,\; \lnot b @ 2 \rangle$, we obtain $\csubs{2}{(a \lor b \lor c)}{(a \lor f \lor c)}$, but $\notsubs{1}{(a \lor b \lor c)}{(a \lor f \lor c)}$ (since $b$ is falsified at level 2 but not at level 1).

It is easy to verify that, as expected, $d$-subsumption entails $d$-implication.

\begin{corollary}[$d$-subsumption entails $d$-implication]\label{cor:csub-impl}
For any clauses $\alpha,\beta$ and any $d \in \mathbb{N}$,  
if $\csubs{d}{\alpha}{\beta}$, then $\alpha \to^d \beta$.
\end{corollary}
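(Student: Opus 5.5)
We argue directly from the definitions. By Def.~\ref{def:connectives-level-d}, showing $\alpha \to^d \beta$ means showing $\tau^{\le d} \models (\alpha \to \beta)$. So we fix an arbitrary total assignment $\sigma$ that extends the partial assignment $\tau^{\le d}$ and satisfies $\alpha$. We then show that $\sigma$ also satisfies $\beta$. We read ``$\tau^{\le d}\models\varphi$'' in the usual way: every total extension of $\tau^{\le d}$ satisfies $\varphi$. This matches the example given after Def.~\ref{def:connectives-level-d}.

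The argument has three steps.

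\textbf{Step 1: a true literal of $\alpha$ survives pruning.} Since $\sigma \models \alpha$, some literal $\ell \in \alpha$ is true under $\sigma$. Suppose $\lnot\ell \in \tau^{\le d}$. Then $\sigma$, which extends $\tau^{\le d}$, would make $\ell$ false, a contradiction. Hence $\lnot\ell \notin \tau^{\le d}$, and by Def.~\ref{def:pruning} we get $\ell \in \alpha^{\le d}$.

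\textbf{Step 2: that literal lies in $\beta$.} By the hypothesis $\csubs{d}{\alpha}{\beta}$ and Def.~\ref{def:d-subsumption}, we have $\alpha^{\le d} \subseteq \beta^{\le d}$. So $\ell \in \beta^{\le d} \subseteq \beta$.

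\textbf{Step 3: conclude.} Since $\ell$ is true under $\sigma$ and $\ell \in \beta$, we get $\sigma \models \beta$. As $\sigma$ was arbitrary, $\tau^{\le d} \models (\alpha\to\beta)$.

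None of these steps is a real obstacle. The only point needing care is Step 1, namely the observation that a satisfied literal cannot have been pruned. This holds because pruning removes exactly the literals falsified by $\tau^{\le d}$, and $\sigma$ agrees with $\tau^{\le d}$.

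The edge case $\alpha^{\le d}=\emptyset$ (all of $\alpha$ is falsified) needs no separate treatment. In that case no extension $\sigma$ satisfies $\alpha$, so the implication holds vacuously. Also, the trail has pairwise distinct variables by Def.~\ref{def:trail}, so $\tau^{\le d}$ is consistent and has total extensions. Consistency is not needed for the argument anyway: the statement is universally quantified over extensions, so it would hold vacuously if there were none.
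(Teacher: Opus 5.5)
Your proof is correct and matches the paper's approach: the paper gives no explicit proof and only remarks that the corollary is easy to verify, and your direct unfolding of Defs.~\ref{def:pruning}, \ref{def:d-subsumption}, and \ref{def:connectives-level-d} is exactly that verification (a literal of $\alpha$ that is true under an extension of $\tau^{\le d}$ cannot be pruned, so it lies in $\alpha^{\le d}\subseteq\beta^{\le d}\subseteq\beta$). You also read $\tau^{\le d}\models\varphi$ correctly as truth in every total extension, which is the reading the paper's example after Def.~\ref{def:connectives-level-d} requires.
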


We will need the earliest level $d$ at which a subsumption relation holds, so that a subsumed clause can be safely replaced by the subsuming clause until backtracking below $d$.

\begin{definition}[min-$d$-subsumption]\label{def:minimal-d-subsumption}
Let $\alpha,\beta$ be clauses and $d \in \mathbb{N}$.
We say that $\alpha$ \emph{minimally subsumes} $\beta$ at level $d$ ($\alpha$ \emph{min-$d$-subsumes} $\beta$),
written $\csubsmin{d}{\alpha}{\beta}$, if:
\begin{enumerate}
  \item $\csubs{d}{\alpha}{\beta}$ holds, and
  \item for every decision level $d'$ strictly smaller than $d$, 
        $\csubs{d'}{\alpha}{\beta}$ does not hold.
\end{enumerate}
Thus $d$ is the minimal decision level at which $\alpha$ $d$-subsumes $\beta$.
\end{definition}


For example, with \(\tau = \langle e @ 1, \lnot a @ 2, g @ 3, \lnot h @ 4 \rangle\), we have
\((a \lor b \lor c) \to^2 (d \lor b \lor c)\) and hence, by Cor.~\ref{cor:monotone-d}, also
\((a \lor b \lor c) \to^3 (d \lor b \lor c)\) and \((a \lor b \lor c) \to^4 (d \lor b \lor c)\).
However, \((a \lor b \lor c) \not\to^1 (d \lor b \lor c)\), so the implication at level \(2\) is minimal, i.e., $\csubsmin{2}{(a \lor b \lor c)}{(d \lor b \lor c)}$.

\section{Non-Incremental Backtrackable Inprocessing with Subsumption}
\label{sec:bisub}

This section introduces our framework for backtrackable inprocessing and its main algorithms in the following simplified setting: \emph{non-incremental SAT solving with subsumption as the only inprocessing technique}. This restricted setting already captures the main ideas, terminology, invariants, and algorithms. Section~\ref{sec:bi} then lifts these restrictions.

We design our Backtrackable Inprocessing (BI) module to be integrated as a component of a SAT solver with a clear API to the main search engine. The search engine interacts with BI via the following two API functions:
\begin{enumerate}
  \item \inprocess: the main inprocessing function. It can be invoked after propagation has completed following either a decision or a conflict analysis, at an arbitrary decision level, and may be called any number of times during the SAT solver invocation.
 \item \backtrack: let \(m\) be the level at the end of the latest \inprocess or \backtrack invocation. \backtrack \emph{must} be invoked once if the solver backtracks to any level below \(m\).

\end{enumerate}

\subsection{Clause Classification and Management}
Our approach relies on the extended clause classification and management shown in Table~\ref{tab:clauses}. 

\begin{table}[htbp]
\centering
\caption{Clause classification before the first invocation of \inprocess (Table~\ref{tab:clauses-before}) and
after the last \inprocess or \backtrack invocation ending at level~$n$ (Table~\ref{tab:clauses-after}), as well as additional notation for clause groups (Table~\ref{tab:clauses-notation}). The DL column shows each group's decision level. All clause groups $D_i$ and $S_j^q$ are mutually exclusive. First occurrences of symbols are shown on a light gray background.}

\label{tab:clauses}

\begin{subtable}{\textwidth}
  \centering
  \caption{Before invoking \inprocess for the first time (at any decision level)}
  \label{tab:clauses-before}
  \renewcommand{\arraystretch}{1.05} 
  \setlength{\tabcolsep}{4pt}        
  \begin{tabular}{|c|p{0.36\textwidth}|p{0.48\textwidth}|}
    \hline
    \textbf{DL} &
    \textbf{Active Clauses} &
    \textbf{Comment} \\
    \hline
    0 &
    \colorbox{gray!20}{$D_0$}: active-pervasive &
    $D_0$ and $I$ are 0-equivalent, i.e., $D_0 \leftrightarrow^0 I$. \\
    \hline
    $\infty$ &
    \colorbox{gray!20}{$D_\infty$}: temporary &
    Every clause in $D_\infty$ is implied by $D_0$ at level~0, i.e.,
    $D_0 \to^0 D_\infty$. \\
    \hline
  \end{tabular}
\end{subtable}
\vspace{1.0em}

\begin{subtable}{\textwidth}
\centering
\caption{After \inprocess or \backtrack invocation ending at level $n$}
\label{tab:clauses-after}
\begin{tabular}{|c|l|l|}
\hline
\textbf{DL} & \textbf{Active Clauses} & \textbf{Stashed Clauses (Inactive)} \\
\hline
0 & $D_0$: active-pervasive & \\
\hline
1 & \colorbox{gray!20}{$D_1$}: 1-deputy 
  & \colorbox{gray!20}{$S_1$} $= \colorbox{gray!20}{$S_1^0$}$ \\
\hline
2 & \colorbox{gray!20}{$D_2$}: 2-deputy 
  & \colorbox{gray!20}{$S_2$} $= \colorbox{gray!20}{$S_2^0$} \cup \colorbox{gray!20}{$S_2^1$}$ \\
\hline
\vdots & \vdots & \vdots \\
\hline
$n$ & \colorbox{gray!20}{$D_n$}: $n$-deputy 
    & \colorbox{gray!20}{$S_n$} $= \colorbox{gray!20}{$S_n^0$} \cup \colorbox{gray!20}{$S_n^1$} \cup \cdots \cup \colorbox{gray!20}{$S_n^{\,n-1}$}$ \\
\hline
$\infty$ & $D_\infty$: temporary & \\
\hline
\end{tabular}
\end{subtable}

\vspace{1.0em}

\begin{subtable}{\textwidth}
\centering
\caption{Additional notation for clause groups. Note that $S^i \subseteq S$ for every $i$, and $D \subseteq N \subseteq A$.}
\label{tab:clauses-notation}
\begin{tabular}{|c|l|l|c|}
\hline
\textbf{Symbol} & \textbf{Name}            & \textbf{Definition}                          & \textbf{Active?} \\
\hline
\colorbox{gray!20}{$S$}   & stashed              & $S_1 \cup S_2 \cup \cdots \cup S_n$          & No    \\
\hline
\colorbox{gray!20}{$S^i$} & $i$-restored         & $S_1^i \cup S_2^i \cup \cdots \cup S_n^i$    & No    \\
\hline
\colorbox{gray!20}{$P$}   & pervasive            & $D_0 \cup S^0$                               & Mixed \\
\hline
\colorbox{gray!20}{$D$}   & deputy               & $D_1 \cup \cdots \cup D_n$                   & Yes   \\
\hline
\colorbox{gray!20}{$N$}   & active-non-temporary & $D_0 \cup D$                                 & Yes   \\
\hline
\colorbox{gray!20}{$A$}   & active               & $N \cup D_\infty$                            & Yes   \\
\hline\hline
\colorbox{gray!20}{$I$}   & input                & the original input clauses                   &       \\
\hline
\end{tabular}
\end{subtable}

\end{table}

Table~\ref{tab:clauses-before} demonstrates that, before the initial invocation of \inprocess (at whatever decision level), the solver maintains two standard clause types: the active-pervasive clauses $D_0$, which are logically equivalent to the original input clauses $I$ up to level-0 simplifications, and the temporary (learnt) clauses $D_\infty$, each of which is implied by $D_0$ at level~0.

Table~\ref{tab:clauses-after} and Table~\ref{tab:clauses-notation} present the clause classification from the first invocation of \inprocess onward, generalizing and extending this standard partition. 

Specifically, in addition to active-pervasive and temporary clauses, we track inactive \emph{stashed} clauses and active \emph{deputy} clauses. Before restoration, each stashed clause is always represented by some active deputy clause, although this deputy may change over time.

Each active clause group $D_k \subseteq A$ and each stashed clause group
$S_k^q \subseteq S$ is indexed by a level~$k$.
Clauses in $D_k$ remain active and \emph{must not be removed} until the
solver backtracks below~$k$.
Once the solver backtracks below~$k$, this restriction is lifted, and
clauses in $D_k$ may either be deleted or treated as temporary clauses,
that is, moved to $D_\infty$.
Since the solver is always below~$\infty$, clauses in $D_\infty$ are,
by convention, always eligible for deletion.

Clauses in a stashed group $S_k^q$ remain inactive until the solver backtracks below their associated level~$k$. When this occurs, they are restored into $D_q$, where we refer to $q < k$ as their \emph{restoration level}.

\subsection{Invariants}

\begin{figure}[t]
  \centering
  \begin{tcolorbox}[colback=gray!12,
                    colframe=gray!60!black,
                    boxrule=0.4pt,
                    arc=2pt,
                    left=6pt,right=6pt,top=6pt,bottom=6pt,
                    width=1\linewidth,
                    title={BI invariants (see Table~\ref{tab:clauses} for notation)}]
  \begin{enumerate}
    \item \label{inv:in}\textbf{Input-equivalence: $I \leftrightarrow^0 P$.}      
   \textit{ In words:} the current set of pervasive clauses is logically equivalent to the original input clauses.

    \item \label{inv:co}\textbf{Correctness: $N \leftrightarrow^n P$.}
    \textit{ In words:} the active-non-temporary clauses are logically equivalent to the pervasive clauses until backtracking below $n$.

    \item \label{inv:re}\textbf{Representation.}
For all $\alpha \in S_{i \le n}$, one of the following holds:
\begin{enumerate}
  \item \textbf{Deputy-representation}: For some $r \le i$ and $r$-deputy $\beta \in D_r$, $\beta \to^i \alpha$.
  \item \textbf{Restored-representation}: For some $r \le i$, $j > i$, and $r$-restored $\beta \in S_j^r$, $\beta \to^i \alpha$.
  \item \textbf{Trivial-representation}: $\top \to^i \alpha$ (equivalently, $\tau^{\le i} \models \alpha$ by Def.~\ref{def:satisfaction-level-d}).
\end{enumerate}
\textit{   In words:} every clause stashed at some level $i \le n$ is implied at level $i$ either by a deputy whose level is at most $i$, or by a clause stashed at some level $j > i$ that will be restored as such a deputy, or else is trivially implied at level~$i$.

    \item \label{inv:cl}\textbf{Closure: $\forall \gamma \in (A \cup S).\, P \to^0 \gamma$.}
    \textit{ In words:} every clause, active or stashed, is implied by the pervasive clauses.
  \end{enumerate}
  \end{tcolorbox}
\caption{BI invariants (see Table~\ref{tab:clauses-notation} for notation). 
The level $n$ is the decision level of the latest completed BI invocation (\inprocess or \backtrack). 
The connectives $\to^i$ and $\leftrightarrow^i$ denote implication and equivalence at decision level $i$ (Def.~\ref{def:connectives-level-d}). 
Inv.~\ref{inv:co} follows from the other invariants (Lemma~\ref{lem:co-from-re-cl}).}

  \label{fig:bi-invariants}
\end{figure}

Fig.~\ref{fig:bi-invariants} summarizes the invariants jointly maintained by the main SAT search engine and our BI module. The BI module is explicitly designed to enforce these invariants, while they serve as requirements for the search engine. We will shortly argue that these requirements are natural for a standard SAT solver.

\subsubsection{High-Level Invariants and Framework Correctness}
Invs.~\ref{inv:in} and~\ref{inv:co} together guarantee the correctness of our framework: at any point during the search, the active-non-temporary clauses $N$ are logically equivalent to the original input clauses $I$ (up to level-0 simplifications implied by $I$). Strictly speaking, Inv.~\ref{inv:co} ensures this equivalence only until backtracking below level~$n$, but maintaining this invariant during backtracking is precisely the responsibility of our \backtrack\ procedure.

\subsubsection{Low-Level Invariants}

If Invs.~\ref{inv:in} and~\ref{inv:co} capture the high-level correctness of our framework, then Invs.~\ref{inv:re} and~\ref{inv:cl} provide the corresponding low-level guarantees.

Intuitively, Inv.~\ref{inv:re} ensures that every stashed clause is represented until its restoration. Unless it is trivially implied (trivial-representation), it is either represented by an active deputy (deputy-representation) or by another stashed clause that will later be restored as such a deputy (restored-representation). In the restored-representation case, at any concrete point the stashed clause is still represented by an active deputy, but \emph{transitively}: it is represented via a chain of stashed clauses, the last of which is represented by an active deputy.

Inv.~\ref{inv:cl} ensures that all clauses in the system are implied by the pervasive clauses.

Lemma~\ref{lem:co-from-re-cl} shows that Invs.~\ref{inv:re} and~\ref{inv:cl} imply Inv.~\ref{inv:co} throughout the execution of our algorithm. The proofs of this and all other lemmas appear in Appendix~\ref{app:proofs}.

\begin{restatable}[Correctness from representation and closure]{lemma}{coFromReCl}\label{lem:co-from-re-cl}
Assume Invs.~\ref{inv:re} and~\ref{inv:cl} hold. Then Inv.~\ref{inv:co} holds as well, i.e., $N \leftrightarrow^n P$.
\end{restatable}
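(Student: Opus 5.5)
We prove the two directions of $N \leftrightarrow^n P$ separately. Recall $N = D_0 \cup D$ with $D = D_1 \cup \cdots \cup D_n$, and $P = D_0 \cup S^0$. Both sides contain $D_0$, so it suffices to show $P \to^n D$ and $N \to^n S^0$. Throughout we work under the fixed partial assignment $\tau^{\le n}$ and use Cor.~\ref{cor:monotone-d} to lift implications at lower levels to level~$n$.

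\textbf{Direction $P \to^n N$.} Every clause $\gamma \in D \subseteq A$ satisfies $P \to^0 \gamma$ by Inv.~\ref{inv:cl}. Since $n \ge 0$, Cor.~\ref{cor:monotone-d} gives $P \to^n \gamma$. Together with the trivial $P \to^n D_0$, this yields $P \to^n N$.

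\textbf{Direction $N \to^n P$.} It remains to show $N \to^n \alpha$ for every $\alpha \in S^0$. We prove the stronger claim that $N \to^n \alpha$ for every stashed $\alpha \in S_i$ with $i \le n$, not only those in $S^0$. The argument is by strong induction on the stash level $i$, \emph{descending} from $n$ down to $1$.

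Fix $\alpha \in S_i$ and apply Inv.~\ref{inv:re}.
\begin{itemize}
\item \emph{Deputy case.} There is $\beta \in D_r \subseteq N$ with $r \le i$ and $\beta \to^i \alpha$. By monotonicity $\beta \to^n \alpha$, so $N \to^n \alpha$.
\item \emph{Trivial case.} We have $\tau^{\le i} \models \alpha$, hence $\tau^{\le n} \models \alpha$, and the implication holds vacuously.
\item \emph{Restored case.} There is $\beta \in S_j^r$ with $j > i$ and $\beta \to^i \alpha$. Since $\beta \in S_j$ and $j \le n$ (stashed groups only go up to $S_n$), the induction hypothesis at the larger level $j$ gives $N \to^n \beta$. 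Monotonicity lifts $\beta \to^i \alpha$ to $\beta \to^n \alpha$, and chaining the two implications under $\tau^{\le n}$ gives $N \to^n \alpha$.
\end{itemize}
The base case $i = n$ has no restored-representation option, since no $j > n$ exists, so only the first two cases arise there.

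The main subtlety is choosing the right induction measure. Restored-representation points to strictly \emph{higher} stash levels, so the induction must descend on $i$, and termination relies on all stash levels being bounded by $n$. One must also check that chaining $d$-implications is valid, i.e.\ that $\varphi \to^n \psi$ and $\psi \to^n \chi$ give $\varphi \to^n \chi$. This holds because each is an entailment relative to the same partial assignment $\tau^{\le n}$, so it is transitive.

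Inv.~\ref{inv:cl} is used only for the first direction, and Inv.~\ref{inv:re} only for the second.
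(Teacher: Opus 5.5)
Your proof is correct and follows the paper's argument. You use closure plus Cor.~\ref{cor:monotone-d} for $P \to^n N$, and a downward induction on the stash level $i$ over the three cases of Inv.~\ref{inv:re} for $N \to^n P$. Your strengthening of the inductive claim from $S_i^0$ to all of $S_i$ is a genuine tightening: the restored-representation case gives a representative $\beta \in S_j^r$ with possibly $r \neq 0$, which the paper's induction hypothesis, stated only for $0$-restored clauses, does not formally cover.
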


\subsubsection{Why the Search Engine Maintains All Invariants}
We argue that a standard SAT search engine naturally maintains our invariants, both prior to the first \inprocess call and between subsequent invocations of \inprocess and \backtrack. Sect.~\ref{sec:alg} constructively shows that our BI module maintains them as well.

We assume that the search engine itself does not apply inprocessing (which is instead handled via interaction with our BI module), and that it performs only standard level-0 simplifications implied by the input clauses $I$, such as deleting globally satisfied clauses and removing globally falsified literals. Throughout this section, all logical equivalences and implications are understood modulo these level-0 simplifications, as justified by Cor.~\ref{cor:level0-implied}.


Let us verify that our invariants (skipping Inv.~\ref{inv:co} by Lemma~\ref{lem:co-from-re-cl}) hold at the time of the first \inprocess call by the search engine. Inv.~\ref{inv:in} holds since, by construction, $D_0$ comprises the input clauses $I$ up to the standard level-0 simplifications. Inv.~\ref{inv:re} is vacuous, because no stashed clauses exist before inprocessing. Inv.~\ref{inv:cl} holds since $P$ comprises $D_0$, and $A \cup S$ comprises $D_0 \cup D_\infty$, where every temporary clause in $D_\infty$ is implied by $D_0$ by construction.

We now argue that the search engine maintains all invariants in each phase between two BI invocations (\inprocess or \backtrack). Let $n$ be the decision level of the latest BI call. Crucially, during the phase, the level never drops below $n$, since \backtrack is invoked whenever backtracking below $n$ occurs.
At the beginning of the phase, Invs. ~\ref{inv:in} and~\ref{inv:co} imply that $I$, $P$, and $N$ are equivalent at level~$n$, while $D_\infty$ is implied by these groups by Inv.~\ref{inv:cl}. The search engine operates only on the active clauses $A = N \cup D_\infty$, which, by construction, remain equivalent to $I$  at level~0 until the end of the phase. Hence Invariants~\ref{inv:in} and~\ref{inv:cl} continue to hold throughout the phase. As for Inv.~\ref{inv:re}, the stashed clause sets do not change, being inactive, while deputy clauses may either remain unchanged or be simplified at level~0 (the latter only if the solver allows level-0 simplifications at arbitrary levels). Such simplifications preserve restored- and trivial-representation. Deputy-representation is preserved, except when a representing deputy becomes satisfied at level~0 and is deleted, in which case any stashed clause it represented falls under trivial-representation. Inv.~\ref{inv:co} holds by Lemma~\ref{lem:co-from-re-cl}.

\subsection{Algorithms}
\label{sec:alg}
In this section, we introduce our BI algorithms \inprocess and \backtrack, and demonstrate that they maintain the invariants in Fig.~\ref{fig:bi-invariants}.

\subsubsection{\inprocess}
\label{sec:algin}

\begin{algorithm}[H]
\caption{\inprocess at level $n$}
\label{alg:subsumption}
\begin{algorithmic}[1]
\While{there exist distinct active clauses $\alpha,\beta \in A$ and some $l$ such that $\csubs{l}{\alpha}{\beta}$}
\label{alg:subsumption:while}
  \State Pick any active $\alpha \in D_a$ and $\beta \in D_b$ with $0 \le a,b \le \infty$, $\beta \ne \alpha$, such that $\csubsmin{k}{\alpha}{\beta}$
  \label{alg:subsumption:pick}
  \If{$b < k$}
    \label{alg:subsumption:case-b-less-k}
    \State \ifb $a > k$ \thenb Promote $\alpha$ to $D_k$
    \label{alg:subsumption:promote-k}
    \State Deactivate and stash $\beta$ by moving it to $S_k^b$
    \label{alg:subsumption:stash-beta}
  \Else \Comment{$b \ge k$}
    \label{alg:subsumption:case-b-ge-k}
    \State \ifb $a > b$ \thenb Promote $\alpha$ to $D_b$
    \label{alg:subsumption:promote-b}
    \State Delete $\beta$ \Comment{Alternatively, demote $\beta$ to a temporary clause in $D_\infty$}
    \label{alg:subsumption:delete-beta}
  \EndIf
\EndWhile
\end{algorithmic}
\end{algorithm}

Our \inprocess algorithm is shown in Alg.~\ref{alg:subsumption}. It operates in a loop. In each iteration, it selects a pair of distinct active clauses $\alpha \in D_a$ and $\beta \in D_b$ with $0 \le a,b \le \infty$ such that $\alpha$ subsumes $\beta$ at some decision level, and this subsumption is \emph{minimal} at level~$k$. (If no such pair exists, the algorithm terminates.) The goal is either to delete $\beta$ entirely whenever this is sound, or otherwise to stash $\beta$ for as long as possible—that is, with the lowest possible restoration level—while ensuring that $\beta$ remains represented until it is restored.

Before presenting the algorithm in detail, we need a notion of rank; see Def.~\ref{def:rank}.

\begin{definition}[Rank of an active clause]\label{def:rank}
The \emph{rank} of an active clause $\gamma \in D_i$ is $i$.
\end{definition}

Promoting $\gamma$ to a lower group (i.e., moving $\gamma$ from $D_i$ to some $D_j$ with $j < i$) does not break the invariants in Fig.~\ref{fig:bi-invariants}, and in particular preserves Inv.~\ref{inv:re}, as formalized in Lemma~\ref{lem:promotion-invariants}.

\begin{restatable}[Promotion preserves invariants]{lemma}{propreinv}\label{lem:promotion-invariants}
Let $\gamma \in D_i$ be an active clause, and suppose all invariants in
Fig.~\ref{fig:bi-invariants} hold. If we \emph{promote} $\gamma$ to some group
$D_j$ with $j < i$ (by moving it), then all invariants continue to hold.
\end{restatable}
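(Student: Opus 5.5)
We prove the statement by checking each invariant of Fig.~\ref{fig:bi-invariants} separately. Promotion only moves $\gamma$ from $D_i$ to $D_j$ with $j<i$. It adds no clause, removes no clause, changes no literal, and leaves every stashed group $S_k^q$ and the level $n$ unchanged. The only derived groups that could change are $D$, $N$, $A$ and $P$, so the first step is to see how each is affected. The union $A \cup S$ is unchanged, since $\gamma$ stays active. The set $P = D_0 \cup S^0$ changes only when $j=0$, in which case $P$ grows by $\gamma$. The set $N$ stays the same when $i \ne \infty$. When $i=\infty$, $N$ grows by $\gamma$.

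\textbf{Inv.~\ref{inv:in}, input-equivalence.} If $j>0$, $P$ is unchanged and there is nothing to prove. If $j=0$, the new pervasive set is $P\cup\{\gamma\}$. By Inv.~\ref{inv:cl} we have $P \to^0 \gamma$, so $P \cup \{\gamma\} \leftrightarrow^0 P \leftrightarrow^0 I$.

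\textbf{Inv.~\ref{inv:cl}, closure.} The old $P$ is contained in the new $P$, and $A\cup S$ is unchanged. Every clause in $A \cup S$ was implied at level~0 by the old $P$, so it is still implied by the new, larger $P$.

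\textbf{Inv.~\ref{inv:re}, representation.} This is the only part that needs care, and I expect it to be the main obstacle. Fix a stashed $\alpha \in S_k$ with $k\le n$ and a witness for it. Trivial-representation does not depend on $\gamma$ at all. Restored-representation depends only on the stashed groups, which are unchanged. Deputy-representation uses some $\beta \in D_r$ with $r\le k$ and $\beta\to^k\alpha$. If $\beta\neq\gamma$, the witness is untouched. If $\beta=\gamma$, then $r=i$ and $\gamma$ now lies in $D_j$ with $j<i\le k$, so $j\le k$ and the same implication $\beta\to^k\alpha$ serves as a witness. In short, moving a deputy to a lower level only weakens the constraint $r\le k$.

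Two edge cases need checking. First, if $j=0$, then $\gamma$ lands in $D_0$, which is not a deputy group in Table~\ref{tab:clauses-notation}. Here I would either read the definition with $D_0$ included among representing clauses, or argue that $\gamma\in D_0$ implies the stashed clause at level $k$ via Inv.~\ref{inv:co}. The cleanest fix is to note that $D_0$ clauses are never deleted, so the witness role is preserved; the definition should be read with $r$ ranging over $0,\dots,k$. Second, if $i=\infty$, then $\gamma$ was temporary and could not have served as a deputy witness, so this situation creates no new obligation.

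\textbf{Inv.~\ref{inv:co}, correctness.} This follows from Invs.~\ref{inv:re} and~\ref{inv:cl} by Lemma~\ref{lem:co-from-re-cl}, and it is needed in particular in the case $i=\infty$, where $N$ grows. One can also check it directly: the new element $\gamma$ of $N$ satisfies $P\to^0\gamma$ by Inv.~\ref{inv:cl}, so Cor.~\ref{cor:monotone-d} gives $P \to^n \gamma$, and adding $\gamma$ to $N$ preserves $N\leftrightarrow^n P$.
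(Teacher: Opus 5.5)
Your proof is correct and takes the same invariant-by-invariant route as the paper. It uses the same key observation for Inv.~\ref{inv:re}, that a deputy witness stays a witness because its rank only decreases, and it derives Inv.~\ref{inv:co} from Lemma~\ref{lem:co-from-re-cl}; it is also more careful than the paper's proof, which simply asserts that $P$, $N$, $A$ are unchanged, whereas you correctly use closure for the cases $j=0$ (where $P$ gains $\gamma$) and $i=\infty$ (where $N$ gains $\gamma$) and read deputy-representation with $r=0$ allowed.
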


We now analyze one iteration of \inprocess{} by case-splitting on the relationship between $\beta$’s rank $b$ and the minimal subsumption level $k$.

\textbf{If $\beta$'s rank $b$ is lower than $k$}
(line~\ref{alg:subsumption:case-b-less-k}), we deactivate and stash
$\beta$ at level $k$ with $b$ as its restoration level
(line~\ref{alg:subsumption:stash-beta}). Before that, if $\alpha$'s rank
is higher than $k$, we promote $\alpha$ to $D_k$
(line~\ref{alg:subsumption:promote-k}) so that it can represent
$\beta$ while the solver remains at levels $\ge k$. Furthermore, not only
$\beta$ itself then becomes represented by $\alpha$, but so does any clause
stashed at level $k$ or higher that was previously represented by $\beta$,
whereas any clause stashed at level $< k$ and previously represented by
$\beta$ is still represented by $\beta$, but now via restored-representation.
Lemma~\ref{lem:b-less-k-case} formalizes that these updates preserve all BI
invariants.

\begin{restatable}[Soundness of the $b<k$ case]{lemma}{blesskcase}\label{lem:b-less-k-case}
In the case $b < k$, the updates in
lines~\ref{alg:subsumption:promote-k}--\ref{alg:subsumption:stash-beta}
preserve all BI invariants in Fig.~\ref{fig:bi-invariants}.
\end{restatable}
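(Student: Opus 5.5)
The plan is to split the update into two steps: the optional promotion of $\alpha$ to $D_k$ (line~\ref{alg:subsumption:promote-k}), then the stashing of $\beta$ into $S_k^b$ (line~\ref{alg:subsumption:stash-beta}). By Lemma~\ref{lem:promotion-invariants}, the promotion alone preserves all invariants. Afterwards $\alpha$ has rank $a' := \min(a,k) \le k$. It then remains to show that moving $\beta$ from $D_b$ to $S_k^b$ preserves Invs.~\ref{inv:in}, \ref{inv:re} and~\ref{inv:cl}; Inv.~\ref{inv:co} then follows from Lemma~\ref{lem:co-from-re-cl}. Note that $k \le n$, because subsumption is evaluated under the current trail, so $S_k$ is a legitimate group.

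\textbf{Closure and input-equivalence.} If $b>0$, then $P = D_0 \cup S^0$ is unchanged, so Inv.~\ref{inv:in} holds. If $b=0$, then $\beta$ leaves $D_0$ but enters $S_k^0 \subseteq S^0$, so $P$ is again unchanged as a set. In both cases $A \cup S$ is also unchanged as a set, so Inv.~\ref{inv:cl} is preserved trivially. This is the reason stashing uses restoration level $b$ rather than deleting $\beta$.

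\textbf{Representation.} This is the main step. First, $\beta$ itself, now in $S_k$, is deputy-represented by $\alpha \in D_{a'}$ with $a' \le k$. Indeed, $\csubs{k}{\alpha}{\beta}$ gives $\alpha \to^k \beta$ by Cor.~\ref{cor:csub-impl}.

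Next, take any other stashed $\gamma \in S_i$ whose representation previously used $\beta$. Representations by other clauses are unaffected, since only $\beta$ changed status. Since $\beta$ was active in $D_b$, such a $\gamma$ was deputy-represented: $r = b \le i$ and $\beta \to^i \gamma$. Split on $i$ versus $k$.

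If $i < k$, use restored-representation with $j = k > i$ and $\beta \in S_k^b$, $b \le i$, since $\beta \to^i \gamma$ still holds.

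If $i \ge k$, use deputy-representation by $\alpha$. Here $a' \le k \le i$, and chaining $\alpha \to^k \beta$, lifted to $\alpha \to^i \beta$ by Cor.~\ref{cor:monotone-d}, with $\beta \to^i \gamma$ gives $\alpha \to^i \gamma$.

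Restored-representations via clauses $\beta' \in S_j^r$ and trivial representations are untouched, since those stashed groups do not change.

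The main obstacle I expect is bookkeeping rather than logic. One must confirm that the chain of restored-representations remains well founded, and that the new $S_k^b$ entry satisfies the index constraints $r \le i < j$. In the case $i<k$, the witness needs $b \le i$. This holds because $\gamma$ was deputy-represented by $\beta$ with rank $b$, so $b \le i$ automatically.

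A second point to check is that $\alpha \neq \gamma$ does not matter, and that $\alpha$ itself was not stashed. It is active by the choice in line~\ref{alg:subsumption:pick}, and $\alpha \neq \beta$ guarantees that stashing $\beta$ does not remove the new deputy.
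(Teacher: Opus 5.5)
Your proposal is correct and follows essentially the same route as the paper. You handle line~\ref{alg:subsumption:promote-k} via Lemma~\ref{lem:promotion-invariants}, note that stashing leaves $P$ and $A \cup S$ unchanged so Invs.~\ref{inv:in} and~\ref{inv:cl} survive, split the clauses previously deputy-represented by $\beta$ by whether their level is $\ge k$ (now represented by $\alpha$) or $< k$ (now restored-represented via $\beta \in S_k^b$), and get Inv.~\ref{inv:co} from Lemma~\ref{lem:co-from-re-cl}. You are simply more explicit than the paper about the $b=0$ case for $P$, the index constraint $b \le i$, and the transitivity step chaining $\alpha \to^i \beta$ with $\beta \to^i \gamma$.
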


Otherwise, \textbf{if $\beta$'s rank $b$ is not lower than $k$}, we promote
$\alpha$ to $D_b$ if its rank $a$ is not already $\le b$ and delete $\beta$
(or demote $\beta$ to a temporary clause in $D_
\infty$).
This is safe because $\alpha$ then represents both $\beta$ and any clause
previously represented by $\beta$ until backtracking below $b$. Once the solver
backtracks below $b$, Cor.~\ref{cor:no-repr-below-rank} implies that $\beta$
no longer represents any clause. Hence, $\beta$ can be safely deleted. Cor.~\ref{cor:no-repr-below-rank}
follows from Def.~\ref{def:rank} and Inv.~\ref{inv:re}.

\begin{corollary}[No representation below rank]\label{cor:no-repr-below-rank}
An active clause of rank $k$ can only represent stashed clauses at levels
$\ge k$.
\end{corollary}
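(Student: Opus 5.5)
The plan is to read off the claim directly from the shape of Inv.~\ref{inv:re} together with Def.~\ref{def:rank}. First I fix what ``represent'' means for an active clause. An active clause $\gamma$ represents a stashed clause $\alpha$ exactly when it is the witness $\beta$ in the deputy-representation clause (a) of Inv.~\ref{inv:re}. The restored-representation clause (b) uses stashed witnesses, not active ones. Trivial-representation (c) uses no witness at all. So I only need to analyse clause (a).

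Now let $\gamma$ be active of rank $k$, so $\gamma \in D_k$ by Def.~\ref{def:rank}. Suppose $\gamma$ represents some $\alpha \in S_i$. Clause (a) requires $\gamma \in D_r$ for some $r \le i$. The groups $D_j$ are mutually exclusive (Table~\ref{tab:clauses}), so $r = k$, and hence $k \le i$. Thus every stashed clause represented by $\gamma$ lies at a level $i \ge k$, which is the claim.

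If we also allow $k = 0$ or $k = \infty$, the same argument applies. For $k = 0$ the bound is vacuous, since every stash level is at least $1$. A temporary clause ($k = \infty$) can never serve as a deputy witness, because $D_\infty$ is not among the $D_r$ with $r \le i \le n$.

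The only point needing care is making sure the notion of representation is exactly clause (a). Indirect, transitive representation through restored chains is mediated by stashed clauses, not by the active clause itself, so it does not fall under the statement. I would make this explicit in a single sentence. No other step poses an obstacle; the argument is a direct unfolding of the definitions.
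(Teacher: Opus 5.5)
Your proof is correct and matches the paper, which only notes that the corollary follows from Def.~\ref{def:rank} and Inv.~\ref{inv:re}: a deputy-representation witness for $\alpha \in S_i$ must lie in some $D_r$ with $r \le i$, and mutual exclusivity of the groups forces $r = k$. Restricting ``represent'' to direct (clause (a)) representation is the right reading, because a restored chain can link a rank-$k$ deputy to a stashed clause below level $k$, so the claim would fail for transitive representation.
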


Lemma~\ref{lem:b-ge-k-case} shows that these updates preserve all BI invariants.

\begin{restatable}[Soundness of the $b \ge k$ case]{lemma}{bgekcase}\label{lem:b-ge-k-case}
In the case $b \ge k$, the updates in
lines~\ref{alg:subsumption:promote-b}--\ref{alg:subsumption:delete-beta}
preserve all BI invariants in Fig.~\ref{fig:bi-invariants}.
\end{restatable}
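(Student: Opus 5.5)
The proof splits the update into two steps: first the promotion of $\alpha$ in line~\ref{alg:subsumption:promote-b}, then the deletion (or demotion) of $\beta$ in line~\ref{alg:subsumption:delete-beta}. The promotion step is handled entirely by Lemma~\ref{lem:promotion-invariants}. After it, $\alpha \in D_{a'}$ with $a' = \min(a,b) \le b$, and all invariants still hold. It then remains to show that removing $\beta$ from $D_b$ preserves the invariants. Inv.~\ref{inv:co} need not be checked, since it follows from Invs.~\ref{inv:re} and~\ref{inv:cl} by Lemma~\ref{lem:co-from-re-cl}.

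\textbf{Invariants~\ref{inv:in} and~\ref{inv:cl}.} These are easy. If $b \ge 1$, then $\beta \notin D_0$, so $P$ is unchanged and Inv.~\ref{inv:in} holds. Moreover $A \cup S$ only shrinks (or $\beta$ moves to $D_\infty$, where it is still implied by $P$), so Inv.~\ref{inv:cl} holds.

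If $b = 0$, then $k \le b$ forces $k = 0$ and $a' = 0$, so $\alpha \in D_0 \subseteq P$ and $\alpha \subseteq^0 \beta$. By Cor.~\ref{cor:csub-impl}, $\alpha \to^0 \beta$. Hence removing $\beta$ from $P$ preserves $P \leftrightarrow^0 I$, modulo level-0 simplification, as the paper allows. Likewise $P \setminus \{\beta\} \to^0 \gamma$ for every remaining $\gamma$, since $P \setminus\{\beta\} \to^0 P$.

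If $b = \infty$, deletion of a temporary clause is trivially harmless for Invs.~\ref{inv:in} and~\ref{inv:cl}.

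\textbf{Invariant~\ref{inv:re} (main obstacle).} Fix a stashed $\gamma \in S_i$, $i \le n$, whose only witness was deputy-representation via $\beta$. Restored- and trivial-representations do not involve the active $\beta$, so they are unaffected. Deputy-representation by $\beta$ means $\beta \in D_b$ with $b \le i$ and $\beta \to^i \gamma$; this is Cor.~\ref{cor:no-repr-below-rank}.

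We have $\alpha \subseteq^k \beta$ with $k \le b \le i$. Monotonicity of $\subseteq^{d}$ in $d$ (immediate from Def.~\ref{def:d-subsumption}, since pruning more literals from both sides preserves inclusion of the pruned sets only through Cor.~\ref{cor:monotone-d}) gives $\alpha \to^k \beta$ via Cor.~\ref{cor:csub-impl}, and then $\alpha \to^i \beta$ by Cor.~\ref{cor:monotone-d}. Composing with $\beta \to^i \gamma$ yields $\alpha \to^i \gamma$. Since $\alpha \in D_{a'}$ with $a' \le b \le i$, $\alpha$ is a valid deputy-representation of $\gamma$.

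If $b = \infty$, then $\beta$ could not represent anything at all, because $i \le n < \infty$. The case $b = 0$ is covered by the same argument with $a' = 0$.

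The delicate points I expect to need care are the following.
\begin{itemize}
\item The rank bound $a' \le i$ has to hold, which is exactly why $\alpha$ is promoted to $D_b$ rather than to $D_k$.
\item The transitivity of $\to^i$ must be argued under the same restricted trail $\tau^{\le i}$. This is immediate, since $\tau^{\le i} \models \alpha \to \beta$ and $\tau^{\le i} \models \beta \to \gamma$ together give $\tau^{\le i} \models \alpha \to \gamma$.
\end{itemize}
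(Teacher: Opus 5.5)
Your proof is correct and follows the paper's argument. The promotion is handled by Lemma~\ref{lem:promotion-invariants}. Deleting $\beta$ can change $P$ only when $b=0$, where $k=0$ gives global subsumption by $\alpha\in D_0$. Finally, $\alpha \to^b \beta$ transfers to $\alpha$ every deputy-representation by $\beta$, which by Cor.~\ref{cor:no-repr-below-rank} occurs only at levels $i\ge b$, and $\alpha$'s rank is now at most $b$.

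Your explicit treatment of $b=\infty$, and of closure when $\beta$ leaves $P$, is slightly more careful than the paper's. The parenthetical about monotonicity of $\subseteq^d$ is unnecessary: Cor.~\ref{cor:csub-impl} followed by Cor.~\ref{cor:monotone-d} already yields $\alpha\to^i\beta$.
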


\subsubsection{\backtrack}
\label{sec:algba}

\begin{algorithm}[H]
\caption{\backtrack to level $n$}
\label{alg:backtrack}
\begin{algorithmic}[1]
\State Let $n$ be the current decision level (after backtracking).
\State Let $m \gets \max\{\, i \mid D_i \ne \emptyset \ \text{or}\ S_i \ne \emptyset \,\}$ \Comment{Highest level with active or stashed clauses}
\For{$i \gets m$ \textbf{downto} $n+1$}
  \State \textbf{for all} $\alpha \in D_i$ \textbf{do} Delete $\alpha$ \Comment{Alternatively, demote $\alpha$ to a temporary clause in $D_\infty$}
  \State \textbf{for all} $q \in \{0,\ldots,i-1\}$ \textbf{do}
    \State \quad \textbf{for all} $\beta \in S_i^q$ \textbf{do} Activate $\beta$ and move it from $S_i^q$ to $D_q$
\EndFor
\end{algorithmic}
\end{algorithm}

Our backtracking algorithm \backtrack is shown in Alg.~\ref{alg:backtrack}.
Given the current decision level $n$ after the SAT solver has backtracked,
the procedure identifies the highest level $m$ at which there are still
active or stashed clauses, and then iterates from $m$ down to $n+1$. For each
visited level $i$, it first deletes all deputy clauses in $D_i$, which is
sound (w.r.t.\ Inv.~\ref{inv:re}) because no clause of rank $i$ can
represent any stashed clause below its rank (Cor.~\ref{cor:no-repr-below-rank}). Alternatively, such deputies may be
demoted to temporary clauses, without affecting correctness (Inv.~\ref{inv:cl}).
It then restores every clause stashed at level $i$ with restoration
level $q$ by moving it into $D_q$, which is safe w.r.t.\ Inv.~\ref{inv:re}
because any stashed clause previously restored-represented via a freshly restored
clause continues to be represented by it, now via deputy-representation.

Lemma~\ref{lem:one-step-backtrack} formally argues that a single iteration of the loop in
Alg.~\ref{alg:backtrack} (i.e., processing one level $i$ by deleting $D_i$
and restoring all $S_i^q$) preserves all BI invariants in
Fig.~\ref{fig:bi-invariants}. By induction over the iterations
$m,m-1,\ldots,n+1$, this yields the correctness of \backtrack.

\begin{restatable}[One-step backtrack preserves invariants]{lemma}{onestepback}\label{lem:one-step-backtrack}
Processing a single level $i$ in \backtrack{} (i.e., deleting $D_i$ and restoring
all $S_i^q$ into $D_q$) preserves all BI invariants in
Fig.~\ref{fig:bi-invariants}.
\end{restatable}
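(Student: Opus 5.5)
We argue invariant by invariant for a single iteration at level $i$. Since the loop runs from $m$ downward and the solver's current level after backtracking is $n<i$, we may assume inductively that the invariants hold with all groups of index $>i$ already empty. So $i$ is the highest nonempty level. We interpret the invariants before the step with top level $i$ and after the step with top level $i-1$; the final iteration reaches level $n$.

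\textbf{Inv.~\ref{inv:in}.} The set $P=D_0\cup S^0$ does not change. Deleting $D_i$ with $i\ge 1$ never touches $D_0$. A clause moved from $S_i^0$ to $D_0$ stays in $P$. A clause moved from $S_i^q$ with $q>0$ to $D_q$ was never in $P$. Hence $I\leftrightarrow^0 P$ is preserved.

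\textbf{Inv.~\ref{inv:cl}.} The set $A\cup S$ only loses the deleted clauses of $D_i$, or, in the alternative, they move to $D_\infty$. Restored clauses merely move from $S$ to $A$. Since $P$ is unchanged, $P\to^0\gamma$ still holds for every remaining $\gamma$. Once Invs.~\ref{inv:re} and~\ref{inv:cl} are established, Inv.~\ref{inv:co} at the new level follows from Lemma~\ref{lem:co-from-re-cl}.

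\textbf{Inv.~\ref{inv:re}.} This is the main obstacle. Fix a clause $\alpha\in S_{i'}$ that is still stashed, so $i'\le i-1$, and consider the witness that represented it before the step.
\begin{itemize}
\item Trivial-representation does not depend on any clause group, so it is preserved.
\item Deputy-representation by $\beta\in D_r$ with $r\le i'$: since $r\le i'<i$, the deputy $\beta$ is not in the deleted group $D_i$. This is exactly Cor.~\ref{cor:no-repr-below-rank}. So $\beta$ remains active in $D_r$ and still represents $\alpha$.
\item Restored-representation by $\beta\in S_j^r$ with $r\le i'<j$ and $\beta\to^{i'}\alpha$:
  \begin{itemize}
  \item If $j=i$, then $\beta$ is moved to $D_r$ in this step. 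It now gives deputy-representation with the same $r\le i'$ and the same implication $\beta\to^{i'}\alpha$.
  \item If $j\ne i$, then $j<i$ by the maximality of $i$, and $\beta$ stays in $S_j^r$. The restored-representation is therefore preserved.
  \end{itemize}
\end{itemize}

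One subtlety needs care. The $i$ in each implication $\to^{i'}$ refers to the fixed trail prefix $\tau^{\le i'}$ with $i'<i$. This prefix is unaffected by the solver backtracking to $n$ only if $i'\le n$. Clauses stashed at levels in $(n,i)$ are restored by later iterations, and their invariant only has to hold at the end of the whole \backtrack. So we state the invariant for the intermediate state relative to $\tau^{\le i-1}$, and note that every level-$i'$ implication used only refers to literals at levels $\le i'$. These literals are exactly the ones that survive in the backtracked trail whenever $i'\le n$.

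Finally, every clause in $S_i$ is removed from $S$, so no clause stashed at level $i$ remains to be represented. This completes the one-step preservation.
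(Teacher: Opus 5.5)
Your proof is correct and follows essentially the same route as the paper's: invariant by invariant, with $P = D_0 \cup S^0$ unchanged for Invs.~\ref{inv:in} and~\ref{inv:cl}, the same case split for Inv.~\ref{inv:re} (deputies representing levels below $i$ have rank below $i$, and clauses restored from $S_i^q$ turn restored-representation into deputy-representation with rank $q$), and Inv.~\ref{inv:co} via Lemma~\ref{lem:co-from-re-cl}. You also handle the intermediate states explicitly: groups above $i$ are already empty, so $j<i$ in the remaining restored-representation case, and level-$i'$ implications are evaluated against the pre-backtrack trail until only levels $\le n$ remain. The paper's proof leaves these details implicit.
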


\section{Backtrackable Inprocessing}
\label{sec:bi}

This section extends the subsumption-only non-incremental BI algorithm to support selfsumption, BVE, and incremental SAT solving. With the theoretical foundations (clause classification, invariants, correctness) already established in Sect.~\ref{sec:bisub}, we focus primarily on the practical aspects of the algorithms.

{\sloppy Our high-level inprocessing algorithm largely reuses \emph{incremental SatELite}—the original global level-0 non-incremental SatELite preprocessing algorithm~\cite{DBLP:conf/sat/EenB05}, extended to global level-0 \emph{incremental} preprocessing in our earlier work~\cite{DBLP:conf/sat/NadelRS12}. Sect.~\ref{sec:related-incremental} reviews incremental SatELite and related work, and Sect.~\ref{sec:bicomplete} presents our complete BI algorithm.\par}

\newcommand{\occ}{\mathrm{occ}}     
\newcommand{\CH}{\mathcal{H}}       
\newcommand{\EO}{\mathcal{O}}       
\newcommand{\ES}{\mathcal{E}}       

\subsection{Incremental Preprocessing with SatELite and Related Work}
\label{sec:related-incremental}

\newcommand{\resolvent}{\otimes}

We start with standard definitions for resolution and selfsumption (self-subsuming resolution).

\begin{definition}[Resolution]\label{def:resolution}
Let $\alpha,\beta$ be clauses and $x$ a variable. They are \emph{resolvable on $x$}
if $x\in\alpha$ and $\lnot x\in\beta$. In this case the \emph{resolvent} is $\alpha \resolvent_x \beta := (\alpha\setminus\{x\}) \cup (\beta\setminus\{\lnot x\})$, and the \emph{resolution rule} derives $\alpha \resolvent_x \beta$ from
$\alpha$ and $\beta$.
\end{definition}

\begin{definition}[Selfsumption, Strengthening]\label{def:selfsumption}
Let $\alpha,\beta$ be clauses resolvable on $x$. We say that
\emph{$\alpha$ selfsumes $\beta$ on $x$} if
$\alpha \resolvent_x \beta \subseteq \beta$.
Equivalently, writing $\alpha = C \lor x$ and $\beta = D \lor \lnot x$, this holds
iff $C \subseteq D$. In this case, $\beta$ may be \emph{strengthened} to
$D = \alpha \resolvent_x \beta$.
\end{definition}

For example, if $\alpha = (a \lor x)$ and $\beta = (a \lor b \lor \lnot x)$,
then the resolvent on $x$ is $\alpha \resolvent_x \beta = (a \lor b)$, which is
a subset of $\beta$. Hence $\alpha$ selfsumes $\beta$ on $x$, and $\beta$ can be
strengthened to $(a \lor b)$.

Finally, we use the term \emph{sumption} to refer to subsumption and
selfsumption together, and the verb \emph{to sume} to indicate that a clause
either subsumes or selfsumes another clause.

Alg.~\ref{alg:satelite} presents incremental preprocessing with
SatELite~\cite{DBLP:conf/sat/EenB05,DBLP:conf/sat/NadelRS12} in high-level, using the core
data structures listed in Fig.~\ref{fig:incsatelite}. 

Consider the data structures in Fig.~\ref{fig:incsatelite}.
The occurrence lists $\occ[\ell]$ store all clauses containing literal~$\ell$.
The clause heap $\CH$ is a size-based heap used for efficient backward
sumption to scan candidate clauses from smallest to largest to see whether
they sume others.
Finally, all eliminated variables are maintained in their elimination order in
$\EO$ (preserving this order is crucial for incrementality~\cite{DBLP:conf/sat/NadelRS12}), and for each eliminated variable $v$ the original clauses are stored in $F^+(v)$ and $F^-(v)$.

Alg.~\ref{alg:satelite} is invoked at the beginning of every incremental SAT
query. It first scans the clauses to initialize $\occ$ and $\CH$
(line~\ref{line:init}). Then, the loop in line~\ref{line:eo-loop} iterates
over all eliminated variables in $\EO$. If new clauses containing a variable $v$
have appeared since the previous invocation, the algorithm chooses to either reintroduce
$v$ by restoring $F^+(v)$ and $F^-(v)$ (optionally removing their resolvents) and removing $v$ from $\EO$,
or reeliminate $v$ by adding the new clauses to $F^+(v)$ and $F^-(v)$ and
creating new non-tautological resolvents, using the standard bounded-clause elimination heuristic
to make this choice. Both reintroduction and reelimination may generate new clauses with eliminated
variables, but, crucially for correctness, only with variables that will still
be visited later in the current pass over $\EO$~\cite{DBLP:conf/sat/NadelRS12,DBLP:conf/sat/FazekasBS19}.
The algorithm then enters the main preprocessing loop (line~\ref{line:outer-repeat}).
It begins by applying backward sumption to a fixpoint (line~\ref{line:sumption}).
Next, every variable that occurs in clauses removed or updated during the latest
sumption step is considered for elimination (line~\ref{line:elim-scan}).
A variable $v$ is eliminated as long as doing so does not increase the clause
count (line~\ref{line:budget}). Finally, elimination is applied only to
pervasive clauses: temporary clauses are ignored, and, if elimination succeeds,
any temporary clauses containing $v$ are deleted.

\subsubsection{Other Related Work}

We reuse the high-level structure of incremental SatELite in our BI framework. 
Supporting gate extraction~\cite{DBLP:conf/sat/EenB05}, as well as extending 
BI beyond sumption and variable elimination (in the spirit of 
\cite{DBLP:conf/sat/FazekasBS19}, which achieves this for global level-0 
inprocessing), is left for future work. We previously proposed incremental 
preprocessing after propagating assumptions in 
\cite{DBLP:conf/sat/NadelRS14}. However, the present work is both 
significantly more general—our framework supports \emph{in}processing at any 
decision level and at any point during solving, rather than only 
\emph{pre}processing at the beginning of each incremental query—and 
practically more useful, since~\cite{DBLP:conf/sat/NadelRS14} is 
closed-source (whereas our implementation is open-source) and relies on 
maintaining resolution proofs in memory, an infrastructure not supported by 
modern SAT solvers.


\begin{figure}[t]
\centering
\begin{tcolorbox}[colback=gray!6,colframe=gray!50!black,boxrule=0.4pt,arc=2pt,
  left=6pt,right=6pt,top=6pt,bottom=6pt,
  title={Core data structures for incremental preprocessing with SatELite~\cite{DBLP:conf/sat/EenB05,DBLP:conf/sat/NadelRS12}}]

\begin{itemize}
  \item \emph{Per-literal occurrence lists} $\occ[\ell]$: vector of clauses containing literal~$\ell$.
  \item \emph{Clause heap} $\CH$: size-based clause heap for backward sumption.
  \item \emph{Eliminated variables:}
  \begin{enumerate}[label=(\roman*), leftmargin=*]
    \item eliminated variables in their elimination order $\EO=[v_1,\ldots,v_m]$;
    \item for each $v\in\EO$, original clauses 
      $F^+(v)=\{C\mid v\in C\}$ and $F^-(v)=\{C\mid \lnot v\in C\}$.
  \end{enumerate}
\end{itemize}

\end{tcolorbox}
\caption{Core data structures used by incremental SatELite.}
\label{fig:incsatelite}
\end{figure}


\begin{algorithm}[t]
\caption{Incremental Preprocessing with SatELite}
\label{alg:satelite}
\begin{algorithmic}[1]

\State Initialization: scan clauses to initialize $\occ$ and $\CH$. \label{line:init}

\For{\textbf{each} $v \in \EO$} \label{line:eo-loop}\Comment{Reintroduction/reelimination loop}
  \If{new clauses with $v$ exist} \label{line:eo-exists}
     \State Reintroduce or reeliminate $v$ \label{line:eo-repair}\Comment{Reintroduction removes $v$ from $\EO$}
  \EndIf
\EndFor

\Repeat \label{line:outer-repeat} \Comment{Main preprocessing loop}
  \State Apply backward \emph{sumption} (subsumption + selfsumption) to a fixpoint. \label{line:sumption}

  \For{\textbf{each} variable $v$ in clauses removed or updated during latest sumption} \label{line:elim-scan}
    \If{eliminating $v$ does not increase the clause count} \label{line:budget}\Comment{ignoring temporaries}
      \State Eliminate $v$: save $F^+(v)\cup F^-(v)$ and add non-tautological resolvents. \label{line:eliminate}
      \State Append $v$ to $\EO$. \label{line:update-stores}
    \EndIf
  \EndFor

\Until{no elimination in the last iteration} \label{line:repeat-end}

\end{algorithmic}
\end{algorithm}

\subsection{Incremental Backtrackable Inprocessing with Sumption and BVE}
\label{sec:bicomplete}

Throughout the rest of the paper, whenever we refer to level-0 notions
(satisfaction, implication, equivalence, or subsumption with respect to $\tau^{\le 0}$),
we call them \emph{global}. In contrast, notions at levels $d>0$ (i.e., with respect
to $\tau^{\le d}$ for some $d>0$) are called \emph{conditional}, since they depend on
the current trail.

To integrate our subsumption-only BI framework from Section~\ref{sec:bisub} into the SAT solver, it is sufficient to:
\begin{enumerate}
  \item Apply \inprocess (Alg.~\ref{alg:subsumption}) for handling conditional subsumption directly as the implementation of the sumption step in incremental SatELite (line~\ref{line:sumption} in Alg.~\ref{alg:satelite}), retaining the original processing order from smallest to largest clauses via the size-based heap, but ignoring conditionally falsified literals when computing clause sizes.
  \item Stay synchronized with respect to backtracking: let $m$ be the decision level at the end of the latest \inprocess or \backtrack (Alg.~\ref{alg:backtrack}) invocation. Then \backtrack \emph{must} be invoked whenever the solver backtracks below level~$m$.
\end{enumerate}

Below, Sect.~\ref{sec:sumption} defines conditional selfsumption and explains how to extend the above BI integration in incremental SatELite to support it, while Sects.~\ref{sec:sat} and~\ref{sec:bve} show how to handle conditionally satisfied clauses and BVE, respectively, within our framework. Finally, Sect.~\ref{sec:bistrategies} discusses the integration of incremental SAT solving (which comes for free in our framework) and introduces several strategies for applying BI for incremental preprocessing.

\subsubsection{Handling Selfsumption}
\label{sec:sumption}

We begin by defining conditional $d$-selfsumption and min-$d$-selfsumption,
in analogy to $d$-subsumption (Def.~\ref{def:d-subsumption})
and min-$d$-subsumption (Def.~\ref{def:minimal-d-subsumption}).

\begin{definition}[$d$-selfsumption]\label{def:d-selfsumption}
Let $\alpha,\beta$ be clauses resolvable on $x$ and
$d \in \mathbb{N}$. We say that \emph{$\alpha$ $d$-selfsumes $\beta$ on $x$},
written $\mathrm{selfsume}^d_x(\alpha,\beta)$, if $(\alpha \resolvent_x \beta)^{\le d} \subseteq \beta^{\le d}$.
\end{definition}

\begin{definition}[min-$d$-selfsumption]\label{def:minimal-d-selfsumption}
Let $\alpha,\beta$ be clauses resolvable on $x$ and
$d \in \mathbb{N}$. We say that \emph{$\alpha$ minimally $d$-selfsumes $\beta$
on $x$}, written $\mathrm{selfsume}^{\min(d)}_x(\alpha,\beta)$, if:
\begin{enumerate}
  \item $\mathrm{selfsume}^d_x(\alpha,\beta)$ holds, and
  \item for every decision level $d'$ strictly smaller than $d$,
        $\mathrm{selfsume}^{d'}_x(\alpha,\beta)$ does not hold.
\end{enumerate}
Thus $d$ is the minimal decision level at which $\alpha$ $d$-selfsumes $\beta$ on $x$.
\end{definition}

We have augmented Alg.~\ref{alg:subsumption} to support $d$-selfsumption 
as follows:
\begin{enumerate}
\item We identify min-$d$-selfsumption alongside min-$d$-subsumption.
\item If $d=0$ (selfsumption is global), it is handled by strengthening as in global inprocessing.
\item Whenever conditional $\mathrm{selfsume}^{\min(d)}_x(\alpha,\beta)$ for $d>0$ is
      discovered, we reduce its handling to $d$-subsumption as follows. We create
      a new clause marked as \emph{temporary}, $\gamma \in D_\infty$, with
      $\gamma := \alpha \resolvent_x \beta$. By construction,
      $\csubsmin{d}{\gamma}{\beta}$ holds, hence we let our $d$-subsumption algorithm
      handle this case.
\end{enumerate}

We emphasize that, unlike in global selfsumption, in conditional selfsumption we cannot soundly strengthen the selfsumed clause by removing a literal; instead, we must create a new clause. This is due to the presence
of conditionally falsified literals in the selfsuming $\alpha$ but not in the selfsumed $\beta$, rendering the
strengthening justified only conditionally and not globally. For example, let
$\alpha \in A$, $\alpha = (a \lor b \lor x)$, and $\beta \in N$, $\beta = (b \lor c \lor \lnot x)$ under the trail
$\tau = \langle \lnot a @ 1 \rangle$. We have
$\alpha \resolvent_x \beta = (a \lor b \lor c)$, so
$(\alpha \resolvent_x \beta)^{\le 1} = (b \lor c)$ and
$\beta^{\le 1} = (b \lor c \lor \lnot x)$, hence
$(\alpha \resolvent_x \beta)^{\le 1} \subseteq \beta^{\le 1}$ and thus
$\alpha$ min-1-selfsumes $\beta$ on $x$. Our $d$-selfsumption mechanism therefore creates a new temporary clause
$\gamma \in D_\infty$ with $\gamma := \alpha \resolvent_x \beta = (a \lor b \lor c)$
and, given that $\csubsmin{1}{\gamma}{\beta}$, delegates the rest to
$d$-subsumption, which stashes $\beta$ at level~1 and promotes $\gamma$ to $D_1$
to represent $\beta$. Clearly, removing $\lnot x$ from $\beta$ would be unsound.

As a practical note, since creating new clauses is significantly more expensive than removing literals from existing ones, our implementation applies conditional selfsumption only when the number of unassigned literals in the selfsumed clause is at most~3.

\subsubsection{Handling Conditionally Satisfied Clauses}
\label{sec:sat}
Unlike in global inprocessing, backtrackable inprocessing may encounter clauses that are \emph{conditionally} satisfied under the current trail and thus irrelevant at the current decision level. To handle such clauses, we first define minimal satisfaction (w.r.t.\ the implicit trail~$\tau$).

\begin{definition}[Minimal satisfaction]\label{def:minimal-satisfaction}
Let $\varphi$ be a Boolean formula.
We say that $\varphi$ is \emph{minimally satisfied at level $d$} under $\tau$
if:
\begin{enumerate}
  \item $\varphi$ is satisfied at level $d$ under $\tau$, i.e., $\tau^{\le d} \models \varphi$, and
  \item for every decision level $d'$ with $0 \le d' < d$, we have
        $\tau^{\le d'} \not\models \varphi$.
\end{enumerate}
In this case, $d$ is the minimal decision level at which $\varphi$ is
satisfied under $\tau$.
\end{definition}

We handle conditionally satisfied clauses as follows.
During the inprocessor’s initialization (line~\ref{line:init} of
Alg.~\ref{alg:satelite}), we do \emph{not} insert them into the main data
structures (the occurrence lists $\occ$ and the clause heap $\CH$).
Instead, taking advantage of our clause classification, whenever a clause
$\beta \in D_i$ is conditionally satisfied under the current trail, we
compute its minimal satisfaction level $d$ (in the sense of
Def.~\ref{def:minimal-satisfaction}), remove $\beta$ from the active
set $D_i$, and stash it in $S_d^i$ to be restored once the solver has backtracked below $d$, by which time $\beta$ will no longer be satisfied.
Note that Inv.~\ref{inv:re} is maintained for any such clause due to
trivial-representation.

\subsubsection{Handling Bounded Variable Elimination (BVE)}
\label{sec:bve}

We now turn to Bounded Variable Elimination (BVE). Unlike for sumption, there is no ``conditional'' variable elimination, since the variable elimination algorithm itself does not depend on the trail. However, the presence of conditionally satisfied clauses is promising
for \emph{bounded} variable elimination \emph{efficiency-wise}: skipping such clauses as described in  Sect.~\ref{sec:sat} increases the chances of eliminating more variables while adhering to the clause-count bound.
For correctness, when those clauses are later restored, the relevant eliminated variables must be
reintroduced or reeliminated, but this is handled automatically as laid out
below.

Our BI integration preserves the standard BVE scheme of incremental SatELite, but, as explained below, it introduces six easy-to-implement, BI-specific changes, mostly related to assigning and managing ranks of original clauses and resolvents. Below, these changes are numbered as \textbf{(Ci)} at the end of the sentences introducing them.

In BI, new clauses containing eliminated variables may arise not
only from the user (as in standard incremental SAT), but also from restoring
stashed clauses. Fortunately, incremental SatELite (Alg.~\ref{alg:satelite}) already handles clauses with eliminated variables regardless of their origin. We just need to ensure that the search algorithm
invokes inprocessing whenever even one clause with an eliminated variable
is present in the solver, whether it is a user clause added during a new incremental query or a restored stashed clause \bvechange.

The remaining main question for handling BVE in BI is how to treat deputy clauses
during elimination and backtracking. Recall that a deputy of rank $i$ (that is,
a deputy in $D_i$) is required only as long as the solver does not backtrack
below $i$, at which point all clauses in $D_i$ are deleted by \backtrack
(Alg.~\ref{alg:backtrack}). Therefore, conceptually, any resolvents produced
using such deputies become obsolete as soon as the solver backtracks below the
highest rank of their parents. To reflect this, we assign each resolvent the
\emph{maximum} rank of its parent clauses as follows: when, while eliminating
a variable $v$, we resolve $\alpha \in D_i$ with
$\beta \in D_j$ on $v$, producing the resolvent $\gamma = \alpha \resolvent_v \beta$, we assign $\gamma$ to $D_k$, where $k := \max(i,j)$ \bvechange.
 Furthermore, we keep the original rank of every clause in $F^+(v)$ and $F^-(v)$ for any
eliminated variable $v$ \bvechange. When the solver backtracks below level $i$, all clauses
of rank $i$ are deleted from $F^+(v)$ and $F^-(v)$ for every eliminated variable
$v$ \bvechange. When a variable $v$ is reintroduced, the remaining clauses in $F^+(v)$ and
$F^-(v)$ keep their rank when activated \bvechange. Finally, as BI is designed to solve the problem under the trail, BVE attempts to eliminate only unassigned variables, ignoring those already assigned on the trail \bvechange.

Furthermore, following prior work on incremental inprocessing~\cite{DBLP:conf/sat/NadelRS12,DBLP:conf/sat/FazekasBS19},
we restrict elimination to variables occurring in newly added clauses; in BI, we
further consider variables appearing in clauses that contain literals assigned on
the current trail, thereby avoiding repeated elimination attempts for variables
whose relevant context has not changed.

\subsubsection{Incrementality and BI Strategies for Incremental Preprocessing}
\label{sec:bistrategies}

BI inherently accommodates clause additions and changing assumptions, so our framework requires no modifications to support incremental SAT solving and can be applied as is.

Although BI supports any number of inprocessing steps at arbitrary levels, we focus on studying the effects of \emph{incremental preprocessing} after the assumptions are propagated—a minimal step that already distinguishes BI from standard global incremental inprocessing, both in terms of algorithmic capabilities and in its empirical impact on Bounded Model Checking (BMC). Below we present several strategies for incremental preprocessing.

{\sloppy We call the first incremental preprocessing strategy enabled by BI
the \textbf{Assumption-Level (AL) strategy}.
It applies inprocessing immediately after propagating the assumptions for every incremental SAT query.
As a baseline for comparison, we also implement a global-level \textbf{\gl strategy} that backtracks to 0 and preprocesses there with the assumptions frozen (i.e., not allowed to be eliminated) at the start of every incremental query (as in CaDiCaL~\cite{DBLP:conf/cav/BiereFFFFP24}).\par}

Initial experiments on BMC instances showed that,
although the BI-enabled \al is often more efficient, there are benchmarks
where the standard global-level \gl wins. Further analysis revealed that the
reason lies in how variables of literals implied by assumptions, and clauses that become
conditionally satisfied under assumptions, are treated.

\al ignores assigned variables and stashes conditionally satisfied clauses before inprocessing, which can be beneficial because less data is processed and BVE can eliminate more variables. However, there is a trade-off. First, eliminating variables that are about to be implied by assumptions can be helpful, and this is carried out by \gl but not \al. Second, \al misses selfsumptions by conditionally satisfied clauses that could remove falsified literals from other clauses, as illustrated below. Let $\alpha = (a \lor b \lor c)$ and $\beta = (\lnot a \lor b \lor c)$ be clauses handled by \al under the trail $\tau = \langle a @ 1 \rangle$. Since $\alpha$ is stashed by BI as satisfied, \al misses the opportunity to strengthen $\beta$ to $(b \lor c)$. While this does not affect reasoning under the current trail, any new temporary clauses that depend on $\beta$ (including clauses selfsumed by $\beta$ during inprocessing and learnt clauses during search) must carry $\lnot a$, inflating the clause database and causing cache misses. In contrast, \gl would have strengthened $\beta$.

Hence, we introduce a combined \textbf{\combined strategy}: immediately after a new SAT query, we first run inprocessing at level 0 (\gl) and then again after assigning and propagating assumptions (\al). This layered approach, despite its own overhead, can sometimes simplify the formula more effectively by first applying global and then conditional simplifications.

Finally, we observed a clear pattern in when \al, \gl, and \combined work best. Let \(M\) be a measure of how intensively BI is used while \al is running.
 By default, we define $M = \frac{\text{number of conditional selfsumptions}}{\text{number of inprocessing invocations}}$ (a slightly less effective variant is $\frac{\text{stashed clause restorations}}{\text{number of conflicts}}$). Empirically, on instances with a small $M$, \al tends to be most effective (it is cheap while still having noticeable impact); for medium values of $M$, \gl is preferable (the additional overhead of \al is not justified); and when $M$ is large, \combined works best (it is then worthwhile to invest more in inprocessing, given its strong overall effect).

Hence, we designed the following dynamic \textbf{\dynamic strategy}. We start with \al and, after \num{10000} conflicts, permanently select a strategy based on the value of $M$: if $M < \num{200}$, we stay with \al; if $\num{200} \leq M < \num{3000}$, we switch to \gl; and if $M \geq \num{3000}$, we switch to \combined.

\section{Experimental Results}
\label{sec:exp}

We implemented BI in the Island SAT solver, a fork of IntelSAT~\cite{NadelIntelSAT2022}.

We started from the 300 bit-level benchmarks of the Hardware Model Checking
Competition 2017~\cite{BiereVanDijkHeljanko-FMCAD17}. We unrolled each benchmark to bound~100 using the \texttt{aigbmc} tool~\cite{Biere-FMV-TR-11-2},  to obtain incremental SAT instances, and dumped them in IntelSAT's incremental SAT format (an extension of DIMACS)~\cite{NadelIntelSAT2022}. For two of the
benchmarks, unrolling ran out of memory beyond mid-50 bounds, so we excluded
these benchmarks and used the remaining 298 in our experiments. To simulate BMC, we instructed the SAT solver to stop after the first satisfiable query, if any. 

For all experiments, we used machines with 256\,GB of memory and Intel\textsuperscript{\textregistered} Xeon\textsuperscript{\textregistered} E5-2690~v4 processors running at up to 3.5\,GHz, and we set the timeout to one hour. The solver and the benchmarks are publicly available~\cite{BI-artifacts}.
 
We compare the three BI-enhanced incremental preprocessing strategies \al, \combined, and \dynamic against the baseline \gl, which implements standard global incremental preprocessing, in terms of the number of solved bounds. To filter out trivially solved, uninteresting bounds, we report the delta relative to the vanilla Island without inprocessing.

\begin{figure}[htbp]
  \centering
  \begin{minipage}[t]{0.48\linewidth}
    \centering
    \includegraphics[width=\linewidth]{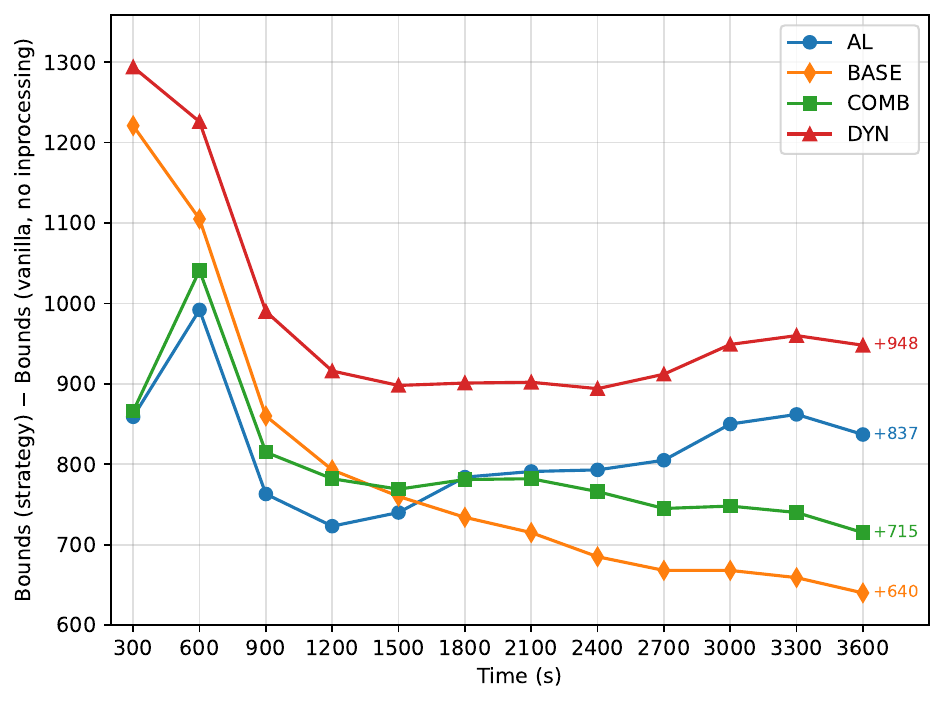}
    \captionof{figure}{Bounds solved over time per strategy: delta vs.\ vanilla (no inprocessing).}
    \label{fig:queries-delta}
  \end{minipage}\hfill
  \begin{minipage}[t]{0.48\linewidth}
    \centering
    \includegraphics[width=\linewidth]{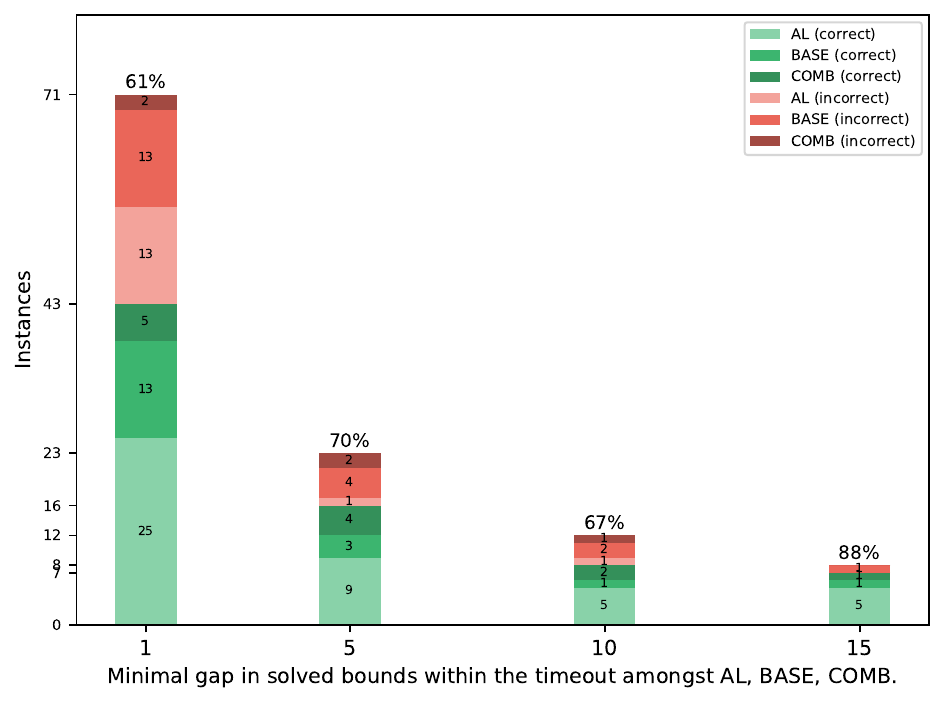}
    \captionof{figure}{\dynamic quality per minimal gap in solved bounds amongst \al, \gl, \combined.}
    \label{fig:dyn-correctness-by-gap}
  \end{minipage}
\end{figure}

The main result is shown in Fig.~\ref{fig:queries-delta}, which reports, over time, the total number of bounds solved per strategy (delta vs.\ vanilla). Although the simplest BI strategy \al (that performs inprocessing after assumptions are propagated) is initially outperformed by the baseline \gl, \al eventually overtakes \gl and, at the timeout, solves 197 more bounds. \dynamic is the superior strategy: it is more efficient than \gl from the very beginning with the gap widening to 308 more bounds than \gl and 111 more bounds than \al at the timeout. \combined is not as effective as the two other BI-enabled strategies overall.

Consider \cref{fig:dyn-correctness-by-gap}, which analyzes the quality of \dynamic's strategy choices. Each vertical bar at $x = n$ shows the decision quality of \dynamic on contested instances where \al, \gl, and \combined achieve different numbers of solved bounds; here, $n$ is the gap between the highest and lowest of those counts, so larger $n$ means higher stakes for a correct decision by \dynamic. The percentage above each bar is the fraction of those instances on which \dynamic chooses correctly. Within each bar, green segments correspond to a correct choice (a strategy that achieves the maximum number of solved bounds), and red segments to an incorrect choice; sub-segments indicate which strategy was chosen (\al, \gl, or \combined). As the gap increases, \dynamic's correctness rate rises and reaches 88\% for the largest gap. Thus, \dynamic tends to choose correctly precisely when the payoff—the difference between the best and worst static strategy—is highest. Although \combined is outperformed by the other static strategies overall, it is still sometimes chosen by \dynamic even for the highest $n$.

\Cref{fig:elims-sumptions} compares \al with \gl in terms of eliminations per variable and sumptions per clause; each point is an instance, and its color indicates which strategy solved more bounds on that instance (green: \al; red: \gl; purple: tie). Both plots use a logarithmic scale. As expected, \al's operation on a smaller formula (under the current trail) yields more eliminations and, even more so, more sumptions than \gl, so that most points lie above the diagonal. However, this is not always the case, which illustrates the trade-off described in \cref{sec:bistrategies}: \gl can eliminate variables  about to be assigned by assumption propagation and also use them for selfsumption. Color analysis shows a weak correlation between the metric and the winner  in the eliminations plot: above the diagonal, about \(21\%\) of instances are won by \al versus \(15\%\) by \gl, whereas below it only about \(4\%\) are won by \al compared to \(10\%\) by \gl (the rest are ties). For sumptions, the correlation is weaker.

\begin{figure}[htbp]
  \centering
  \begin{subfigure}[t]{0.48\linewidth}
    \centering
    \includegraphics[width=\linewidth]{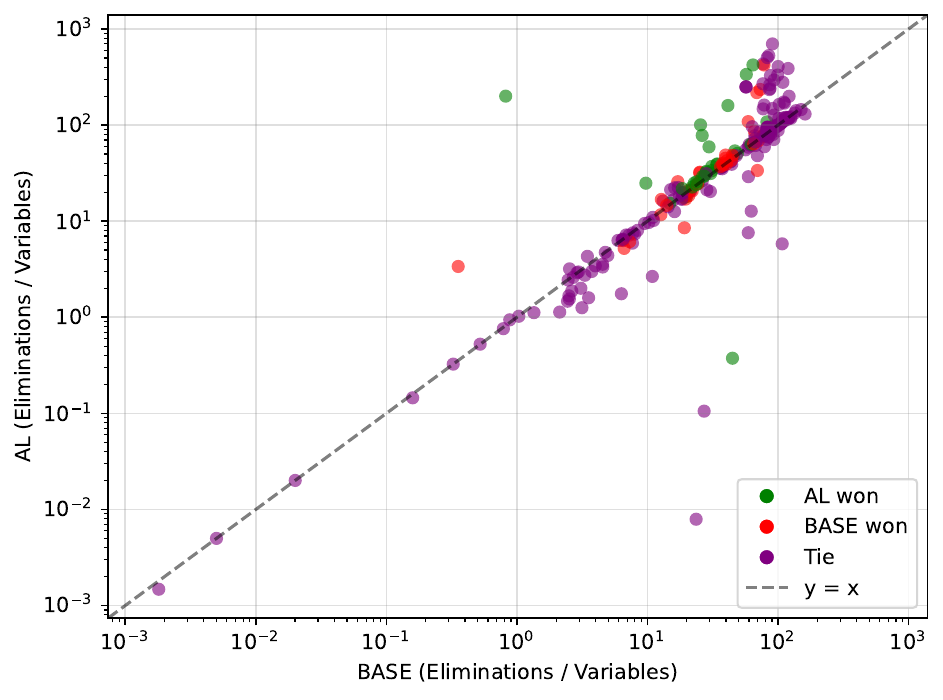}
    \caption{Eliminations/Vars: \al vs \gl}
    \label{fig:elims-vars}
  \end{subfigure}\hfill
  \begin{subfigure}[t]{0.48\linewidth}
    \centering
    \includegraphics[width=\linewidth]{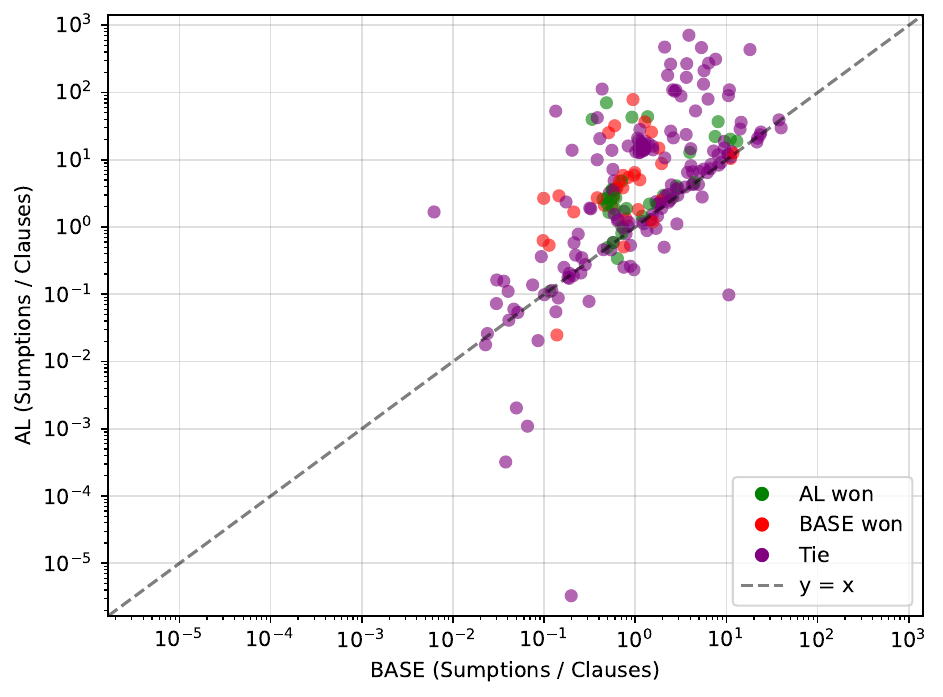}
    \caption{Sumptions/Clss: \al vs \gl.}
    \label{fig:sumptions-clauses}
  \end{subfigure}
  \caption{\al vs \gl: (a) eliminations per variable; (b) sumptions per clause, per benchmark (log scale). Point color: green = \al solved more bounds, red = \gl, purple = tie.}
  \label{fig:elims-sumptions}
\end{figure}

\section{Conclusion}
\label{sec:conclusion}

We introduced \emph{Backtrackable Inprocessing} (BI), a framework that allows inprocessing under the current trail at any level, at any point in the search, while preserving correctness under backtracking (focusing on subsumption, self-subsuming resolution, and BVE). We integrated BI into the Island solver and evaluated several strategies for incremental preprocessing  on BMC benchmarks from HWMCC'17, obtained by unrolling each benchmark up to bound~100 and using a one-hour timeout. Our best BI-enhanced strategy outperforms the baseline strategy, solving $\sim$1.5$\times$ more bounds among those unsolved without inprocessing.

We have so far demonstrated BI's benefit in incremental preprocessing; however, BI enables a much broader design space. Inprocessing can now be integrated into the search in many ways, paving the way for new strategies and significantly more efficient SAT solving, both non-incremental and incremental.

{\sloppy Beyond SAT solving, BI is also promising for enabling inprocessing in state‑of‑the‑art algorithms for model counting~\cite{DBLP:conf/cav/SoosM25} and enumeration~\cite{DBLP:journals/ai/SpallittaSB25}, which rely on systematic search without restarts and have therefore so far been unable to exploit inprocessing.\par}
\bibliographystyle{plainurl} 
\bibliography{bi}         
\clearpage
\appendix
\section{Proofs}
\label{app:proofs}

This appendix contains the detailed proofs of all the lemmas deferred from the main body of the paper.

\coFromReCl*  

\begin{proof}
We prove both directions.

\smallskip\noindent
\emph{First direction:} $(N \equiv D_0 \cup D) \to^n (P \equiv D_0 \cup S^0)$.
Since $D_0 \to^n D_0$ is trivial, it suffices to show that every $0$-restored clause
$\alpha \in S_i^0$ (for some $i \le n$) is implied at level $n$ by $D_0 \cup D$.

We perform downward induction on~$i$.

\emph{Base case ($i=n$):}
By definition of $n$ (the decision level where the latest BI invocation completed),
$S_n$ is the highest stashed group, so the restored-representation case
(representation via some $\gamma \in S_q^r$ with $q>n$) cannot occur.
By Inv.~\ref{inv:re}, each $\alpha \in S_n^0$ is either trivially satisfied at level $n$
or represented by some deputy in $D$.
In both cases, $D_0 \cup D \to^n \alpha$ follows.

\emph{Inductive step ($i<n$):}
Let $\alpha \in S_i^0$.
By Inv.~\ref{inv:re}, $\alpha$ satisfies one of the following:
\begin{itemize}
  \item $\top \to^i \alpha$, whence $D_0 \cup D \to^n \alpha$ by Cor.~\ref{cor:monotone-d};
  \item $\alpha$ is $i$-implied by some deputy in $D$, whence $D_0 \cup D \to^n \alpha$ by Cor.~\ref{cor:monotone-d};
  \item $\alpha$ is implied at level $i$ by some $\gamma \in S_q^r$ with $q>i$.
        By the induction hypothesis, $\gamma$ is $j$-implied by some deputy in $D$
        for some $j \le n$. By  Cor.~\ref{cor:monotone-d}, $D_0 \cup D \to^n \alpha$.
\end{itemize}

Thus every $\alpha \in S_i^0$ is implied at level $n$ by $D_0 \cup D$,
proving $N \to^n P$.

\smallskip\noindent
\emph{Second direction:}
$P \to^n (N \equiv D_0 \cup D)$ follows from Inv.~\ref{inv:cl},
since each clause in $D_0 \cup D$ is implied by $P$ at level~0,
and hence also at level~$n$ by Cor.~\ref{cor:monotone-d}.
\end{proof}

\propreinv*

\begin{proof}
Invs.~\ref{inv:in} and~\ref{inv:cl} depend only on the contents of the
pervasive ($P$), active-non-temporary ($N$), and active ($A$) clause sets,
which are unchanged when $\gamma$ is moved to a lower group within $D$.
For Inv.~\ref{inv:re}, every stashed clause deputy-represented by
$\gamma$ before the promotion is still represented by $\gamma$ afterwards,
since $\gamma$ remains active and its rank only decreases. The
representation of all other stashed clauses is unaffected.
Finally, Inv.~\ref{inv:co} follows from Invs.~\ref{inv:re} and~\ref{inv:cl}
via Lemma~\ref{lem:co-from-re-cl}.
\end{proof}

\blesskcase*

\begin{proof}
The promotion in line~\ref{alg:subsumption:promote-k} only moves $\alpha$
to a lower group within $D$ without changing any stashed sets, so all
invariants are preserved by Lemma~\ref{lem:promotion-invariants}.

Stashing $\beta$ in $S_k^b$ in line~\ref{alg:subsumption:stash-beta} leaves
Invs.~\ref{inv:cl} and~\ref{inv:in} unaffected. We therefore focus on
Inv.~\ref{inv:re}; Inv.~\ref{inv:co} then follows immediately
from Lemma~\ref{lem:co-from-re-cl}.

By construction, $\beta$ itself becomes deputy-represented by $\alpha$.
Moreover, since $\csubsmin{k}{\alpha}{\beta}$ implies $\alpha \to^k \beta$
by Cor.~\ref{cor:csub-impl}, any stashed clause previously represented by
$\beta$ at level $k$ or higher is now deputy-represented by $\alpha$ at the
same level, whereas any stashed clause previously represented by $\beta$ at
some level $< k$ is still represented by $\beta$, but now via
restored-representation rather than deputy-representation.
\end{proof}

\bgekcase*

\begin{proof}
If $a > b$, the promotion in line~\ref{alg:subsumption:promote-b} only moves
$\alpha$ to a lower group within $D$ without changing any stashed sets, so
all invariants are preserved by Lemma~\ref{lem:promotion-invariants}.

Deleting $\beta$ in line~\ref{alg:subsumption:delete-beta} affects
Inv.~\ref{inv:in} only if $\beta$ is 0-restored (i.e., $\beta \in S^0$).
In that case, $\alpha$ must subsume $\beta$ globally (i.e.,
$\csubs{0}{\alpha}{\beta}$), so replacing $\beta$ by $\alpha$ in $D_0$ is
sound. Inv.~\ref{inv:cl} is unaffected. We therefore focus on
Inv.~\ref{inv:re}, with Inv.~\ref{inv:co} following from
Lemma~\ref{lem:co-from-re-cl}.

By assumption, $\csubsmin{k}{\alpha}{\beta}$ holds with $k \le b$, and so
$\alpha \to^b \beta$ by Cor.~\ref{cor:csub-impl} and
Cor.~\ref{cor:monotone-d}. Hence every clause that was represented via $\beta$
at some level $i \ge b$ can instead be represented via $\alpha$ at level~$i$,
so deputy-representation is preserved for all such clauses. Moreover, by
Cor.~\ref{cor:no-repr-below-rank}, no clause is represented by $\beta$ at
levels below its rank~$b$.
\end{proof}

\onestepback*

\begin{proof}
We argue invariant-by-invariant.

\emph{Inv.~\ref{inv:in} (input-equivalence).}
$P = D_0 \cup S^0$ is unchanged: deleting $D_i$ does not affect $P$, and moving
$S_i^0$ from $S^0$ to $D_0$ leaves $D_0 \cup S^0$ unchanged.

\emph{Inv.~\ref{inv:cl} (closure).}
Before the step, $P \to^0 \gamma$ for all $\gamma \in A \cup S$.
Deleting $D_i$ only removes clauses and therefore cannot break closure.
Each restored $\beta \in S_i^q$ already satisfied $P \to^0 \beta$ while stashed.
After activation, $\beta \in D_q \subseteq A$, and the same implication holds.

\emph{Inv.~\ref{inv:re} (representation).}
After the step, the remaining stashed clauses are exactly
$\bigcup_{k<i} S_k$. No $\alpha \in S_k$ with $k<i$ was represented by any clause
in $D_i$ (by Cor.~\ref{cor:no-repr-below-rank}, a rank-$i$ deputy can only represent
levels $\ge i$), so deleting $D_i$ is harmless. If $\alpha \in S_k$ was
restored-represented via some $\beta \in S_i^q$, then $\beta \to^{k} \alpha$
held before; after restoring $\beta$ into $D_q$, the same implication certifies
deputy-representation (with deputy rank $q \le k$). The remaining trivial-representation
still holds.

\emph{Inv.~\ref{inv:co} (correctness).}
Follows from Invs.~\ref{inv:re} and~\ref{inv:cl} via
Lemma~\ref{lem:co-from-re-cl}.
\end{proof}

\end{document}